\documentclass[twoside,11pt,titlepage]{amsart}
\usepackage{amsmath} 
\makeatletter
\newcommand{\addresseshere}{%
  \enddoc@text\let\enddoc@text\relax
}
\makeatother

\usepackage{amsthm} 
\usepackage{amssymb}	
\usepackage{graphicx} 
\usepackage{multicol} 
\usepackage[marginratio=1:1,height=584pt,width=360pt,tmargin=117pt]{geometry}
\usepackage{hyperref}
\usepackage{pst-node}
\usepackage{tikz-cd}
\usepackage{tikz}
\usetikzlibrary{calc}
\usepackage[mathscr]{euscript}
\usepackage[numbers]{natbib}
\usepackage[toc,page]{appendix}

\author{Modjtaba Shokrian Zini, Zhenghan Wang and Xiao-Gang Wen}
\title{Pattern of Zeros}

\makeatletter
\newcommand\Author{Modjtaba Shokrian Zini, Zhenghan Wang and Xiao-Gang Wen}
\let\Title\@title
\def\ps@mystyle{%
      \let\@oddfoot\@empty\let\@evenfoot\@empty
      \def\@evenhead{\makebox[0pt][l]{\thepage}\hfill\Author\hfill}%
      \def\@oddhead{\hfill\Title\hfill\makebox[0pt][l]{\thepage}}%
      \let\@mkboth\markboth}
\makeatother
\pagestyle{mystyle} 
\makeatletter 
\g@addto@macro{\endabstract}{\@setabstract}
\newcommand{\authorfootnotes}{\renewcommand\thefootnote{\@fnsymbol\c@footnote}}%
\makeatother
\makeatletter
\renewcommand{\maketitle} 
{ \begingroup \vskip 10pt \begin{center} \large {\bf \@title}
	\vskip 10pt \large \@author \hskip 20pt \@date \end{center}
  \vskip 10pt \endgroup \setcounter{footnote}{0} }
\makeatother 


\newtheorem{thm}{Theorem}[section]

\newtheorem{cor}[thm]{Corollary}

\theoremstyle{definition}
\newtheorem{dfn}{Definition}
\theoremstyle{remark}
\newtheorem{rmk}{Remark}

\usepackage{url,times,wasysym,geometry,indentfirst,rotating,tikz}
\setcounter{section}{-1}

\title{\LARGE{\bf
\textsc{Pattern of Zeros}}}

\begin{document}
\begin{center}
  \LARGE 
  \maketitle \par \bigskip

  \normalsize
  \authorfootnotes
  Modjtaba Shokrian Zini \footnote{shokrian@math.ucsb.edu}\textsuperscript{,\hyperref[1]{1}}, Zhenghan Wang \footnote{zhenghwa@microsoft.com;
  zhenghwa@math.ucsb.edu}\textsuperscript{,\hyperref[2]{2}} and Xiao-Gang Wen
  \footnote{xgwen@mit.edu}\textsuperscript{,\hyperref[3]{3}} \par \bigskip
\end{center}

\begin{abstract}
In this paper, we provide a set of definitions upon which one can prove in a rigorous way most of the main results achieved in the pattern of zeros classification of fractional quantum Hall states.
\end{abstract}
\section{Introduction}\label{intro}

Classifying quantum phases of matter is an interesting mathematical problem, but probably impossible without restrictions.  Recently, important progress towards some classifications has been made for one class of quantum phases of matter---topological phases of matter (TPMs).  The most important class of TPMs is fractional quantum Hall liquids (FQH) of electrons.  An elementary approach called patterns of zeros is initiated in \cite{PhysRevB.77.235108}.  The Pauli exclusion principle is essential to our understanding of electronic states of matter.  The pattern-of-zero approach is based on a quantified generalization of the Paul exclusion to clusters of electrons, which is essentially a poor-man's version of the powerful conformal field theory (CFT) approach to quantum Hall states.

FQH states are quantum phases of matter that exhibit patterns of long range entanglements with topological symmetry in their ground states. As a symmetry breaking state has only short range entanglement, the Landau symmetry breaking description of phases and phase transitions cannot be adequate for describing these new states of matter.  The FQH states have same symmetries while they have different topological orders, as their universal properties such as the degeneracy of ground states are stable under any small perturbation \cite{wen1990ground}. Indeed, the degenerate ground states are indistinguishable from each other under any local perturbation up to exponential accuracy \cite{bonderson2013quasi}. As outlined in \cite{wen2014pattern}, in order to classify FQH ground states by their characteristic topological data, there have been different approaches. 

The pattern of zeros approach \cite{PhysRevB.77.235108} is inspired by the CFT approach \cite{moore1991nonabelions}. Conformal blocks of some unitary rational CFTs turn out to be related to the \textit{Laughlin states} for $\nu=\frac{1}{m}$,
\begin{align}
\Phi_{\frac{1}{m}}(z_1,\ldots,z_N)=\prod_{i<j}(z_i-z_j)^m 
\end{align} and quasiparticle excitations. The same observation was later made for the parafermion states \cite{read1999beyond} and how CFT conformal blocks encode the way the FQH states goes to zero as a cluster of $k$ particles fuse. This is in fact the idea behind the pattern of zeros. This information on the order of zeros is enough to have a classification of FQH states and their quasiparticles in many instances \cite{wen2014pattern,PhysRevB.77.235108,lu2010non,wen2008topological}. As mentioned in \cite{wen2014pattern}, it should be noted that the CFT approach is not one-to-one and the pattern of zeros can also be thought of a classification of the CFTs giving FQH states.

The idea of pattern-of-zero approach comes from the cluster property of FQH states.  Suppose $\Psi(z_1,z_2,...,z_N)$ is a wave function of $N$ electrons at positions $z_i$ (electrons in FQH states are spinless effectively as they are spin polarized by the strong magnetic fields).  By the Pauli exclusion principle, $\Psi(z_1,z_2,...,z_N)\rightarrow 0$ when two different electrons at $z_i$ and $z_j$ approach each other.  Trial wave functions of FQH states have a cluster property and the wave functions take on simple forms when all $N$ electrons are divided into clusters \cite{read1999beyond} such as the Laughlin wave function.  So in a sense, pattern of zero states are clustered descendants of the Laughlin state.

The pattern of zeros approach tries to classify some symmetric complex functions by their order of zeros when fusion occurs. But as we are thinking of a quantum mechanical system of $N$ particles with $N \to \infty$ in the thermodynamical limit, we expect Hilbert spaces having these polynomials as ground states and a map between the low energy subspaces preserving the space of ground states, hence an algebraic relation between the polynomials. It is not clear yet what explicitly this relation is and it does not seem to be easy to obtain as it may require a diagonalization of the Hamiltonian. In fact, more importantly, we do not have yet a mathematical rigorous formulation of the Hilbert space and the Hamiltonian in general (although, see \cite{girvin1984formalism} on this issue for the lowest Landau levels).

As a result, we will not be able to provide a complete mathematical framework for this approach yet. Nevertheless, we will explore the advances one can make in this regard to put some of the main results of this classification program on a rigorous foundation.

\section{Notations and Conventions}\label{notations}

We work with variables representing positions of different particle types. Each particle of type $a$ is obtained through \lq\lq fusion" (clustering) of some $a_i$-type particles, where $\sum a_i=a$ and denoted by $z^{(a)}$. No superscript would mean a $1$-type particle.

Fusion is denoted by an arrow $(z^{(a_1)},\ldots,z^{(a_k)}) \rightarrow z^{(a)}$ meaning the weighted average $z^{(a)}= \frac{\sum a_iz^{(a_i)}}{a}$, which can be also called the \textit{center of mass}. Notice that different fusions (possibly done in many steps) of same variables give the same variable in the end with the same formula. A subscript may be needed for $z^{(a)}$ showing which $a$ variables are fused together, but mostly it would be clear from the context. Occasionally, we will have to use subscripts like $z^{(a)}_i$ to enumerate them and distinguish the different $a$-type particles. 

As a convention, whenever the word \textit{symmetric} is used, it always means \textit{symmetric between particles of the same type}. 

As an example for all the above, a symmetric polynomial $P(z^{(2)}_1,z^{(2)}_2,z_4,z^{(6)})$ is a polynomial where there are two $2$-type particles, one $1$-type particle, and one $6$-type particle. Exchanging $z^{(2)}_1$ and $z^{(2)}_2$ does not change the polynomial.

Finally, throughout the paper, we will work with polynomials which are \textit{derived} from one original polynomial in a specific way. Those polynomials, although being different, will be denoted by $P_\text{der}$. No confusion should arise as the variables that $P_\text{der}$ depends on will be clear.

\section{Overview of the previous definitions and results}\label{s2}

We will follow \cite{PhysRevB.77.235108} to provide the definitions and results. Assume a polynomial $P(z^{(b_1)}_1,\ldots,z^{(b_n)}_n,\ldots)$ where the number of variables are not specified. In order to take the fusion $(z^{(b_1)}_1,\ldots,z^{(b_a)}_a) \to z^{(k)}$ where $\sum_{1}^{a} b_i=k$ and get the corresponding wave function, replace $z^{(b_i)}_i=z^{(k)}+\lambda \xi_i$ where $\lambda$ is a scalar such that $\sum |b_i\xi_i|^2=1$ (also by definition $\sum b_i\xi_i = 0$). If the following decomposition exists
\begin{align}
P=\lambda^{S_k}R(\xi_1,\ldots,\xi_a)P_{\text{der}}(z^{(k)},z^{(b_{a+1})}_{a+1},\ldots,z^{(b_n)}_n)+ \text{h.o.t}
\end{align}
where h.o.t are the higher order terms in $\lambda$, $P_{\text{der}}$ is called the \textit{derived} polynomial for the fusion and the $S_k$ form a pattern of zeros for the polynomial. Of course, one has to consider the dependence of $S_k$ on the types of fused particles. But as for FQH states, polynomials all have at the start only $1-$type particles (representing bosons) and are symmetric, the definition of $S_k$ does not depend on the choice of variables to be fused and is in fact equal to the least total degree of the $k$ variables.

Further, being able to continue to carry fusions is important and $P_{\text{der}}$ should have a similar decomposition $R'P_{\text{der}}'$. Also different paths of fusions should give same derived polynomials (up to some scalar). This condition is called the \textit{Unique Fusion Condition} (UFC). 

The most simple fusion is that of two particles $(z^{(a)},z^{(b)}) \to z^{(a+b)}$. In this case, the power of $\lambda$ in that fusion is denoted by $D_{a,b}$. For a symmetric translation invariant UFC polynomial, it has been argued \cite{PhysRevB.77.235108} that
\begin{align}
D_{a,b}=S_{a+b}-S_a-S_b.
\end{align}
Further, the following \textit{concavity} inequality holds
\begin{align}
D_{a,b+c} \ge D_{a,b}+D_{a,c} \leftrightarrow S_{a+b+c}+S_a+S_b+S_c \ge S_{a+b}+S_{b+c}+S_{c+a}.
\end{align}

As mentioned before, in order to classify FQH states, a \textit{sequence} of polynomials satisfying certain conditions should be considered. This means symmetric translation invariant UFC polynomials $P_k(z_1,\ldots,z_{N_k})$ with $N_k$ many variables having a filling fraction $\nu=\frac{n}{m}$. This requirement is expressed as $\lim_{k\to \infty}\frac{N_k}{d_k}=\nu$, where $d_k$ is the highest degree of $z_i$ in $P_k$. Let us call such a sequence a $\nu$-\textit{sequence}.

The most important concept is the $n$-Cluster Form ($n$-CF) condition. Satisfying $n$-CF should be thought of  \cite{PhysRevB.77.235108} as having a periodic boson occupation in the orbitals: for every $n$ skip of bosons, there is $m$ skip of orbitals. 

As an example, consider the \textit{Laughlin state} for $\nu=\frac{1}{m}$,
\begin{align}
\Phi_{\frac{1}{m}}(z_1,\ldots,z_N)=\prod_{i<j}(z_i-z_j)^m.
\end{align}
This turns out to be a sequence satisfying all conditions and the ``$n$-CF''. One idea behind $n$-CF is that all FQH states should have as their eventual \textit{descendant} the $\nu$-Laughlin state $\Phi_\nu(z_1,\ldots,z_N)=\prod_{i<j}(z_i-z_j)^{\nu^{-1}}$; there exists a function $G_k$ such that,
\begin{align}
P_k=G_k\Phi_\nu, \forall k
\end{align}
and further, for any fusion of $n$ variables, $G_{k,\text{der}}$ is expected to not depend on $z^{(n)}$. Note that although $G_k$ and $\Phi_\nu$ are multi-valued functions but their product is not. This ensures that any fusion of $n$ variables gives a derived polynomial of the form $P_{\text{der}}=\prod_{i>n}(z^{n}-z_i)^mQ$ where $Q$ is some polynomial not depending on $z^{(n)}$. This condition turns out to be quite restrictive on the pattern of zeros. 

We list all main results \cite{PhysRevB.77.235108} for a UFC and $n$-CF $\nu$-sequence pattern of zeros $(\frac{n}{m};S_1,\ldots,S_n)$:
\begin{enumerate}
\item $S_{a+b}-S_a-S_b \ge 0$.
\item $S_{a+b+c}+S_a+S_b+S_c \ge S_{a+b}+S_{b+c}+S_{c+a}$.
\item $S_{2a}$ is even.
\item $mn$ even.
\item $2S_n \equiv 0 \pmod{n}$.
\item $S_{3a}-S_a$ even.
\item $S_{a+kn}=S_a+kS_n+kma+\frac{k(k-1)mn}{2}$.
\end{enumerate}

In the next sections, we will make definitions and investigate all the above claims mathematically. All claims, (with the exception of (e,f)) will be proven in \hyperref[s5]{section 5} after each concept is analyzed in a more general setting in \hyperref[s3]{section 3} and \hyperref[s4]{4}. Arguments will be made for (e,f) in \hyperref[s6]{section 6} and \hyperref[s3]{3} (respectively) that further assumptions are needed and the current conditions are likely not enough.

\section{Revision of the definitions and properties of $S_a$}\label{s3}

We start with some definitions and try to make clear the relation between $S_a$ and the least total degree.
\begin{dfn} \label{dfn1}
\textbf{(Prederived polynomial)} Given $P(z^{(b_1)}_1,\ldots,z^{(b_n)}_n)$, replace $z^{(b_i)}_i$ by $z^{(k)}+\lambda \xi_i$ where $z^{(k)}=\frac{\sum_{1}^a b_iz^{(b_i)}_i }{k}$, where $\lambda$ is a scalar such  that $\sum |b_i\xi_i|^2=1$. Expand $P$ in terms of $\lambda$ as
\begin{align}
P=\sum \lambda^m Q^m(\xi,z^{(k)},\ldots),
\end{align}
where $\xi=(\xi_1,\ldots,\xi_a)$. Let $s$ be the lowest power of $\lambda$ in that expansion. Then $Q^{s}$ is called the \textit{prederived polynomial}. 
\end{dfn}

If $Q^{s}$ is decomposable as a product of a polynomial in $\xi$ and a polynomial in $(z^{(k)},z^{(b_{a+1})}_{a+1},\ldots,z^{(b_n)}_n)$, ``the'' derived polynomial (up to some scalar) is obtained:

\begin{dfn} \label{dfn2}
\textbf{(Derived polynomial)} With the same settings as \hyperref[dfn1]{\textbf{Definition 1}}, if $Q^{s}=R(\xi)P_{\text{der}}(z^{(k)},z^{(b_{a+1})}_{a+1},\ldots,z^{(b_n)}_n)$, then $P_{\text{der}}$ is called the \textit{derived polynomial}.
\end{dfn}

A \textbf{fusion process} is a sequence of fusions where at each step the derived polynomial is taken and then the next fusion is performed (this assumes the existence of derived polynomial at each step). The final set of variables is called the \textbf{final configuration}. This allows us to define
    
\begin{dfn} \label{dfn3}
\textbf{(sUFC or Symmetric, Unique Fusion Condition)} $P$ satisfies UFC if any two fusion processes on the variables of $P$ having the same final configuration give the same derived polynomial. It satisfies sUFC if it is also symmetric.
\end{dfn}

The fusion process was defined as one fusion at a time. As multiple fusions could happen at the same time, stronger conditions are needed.
\begin{dfn} \label{dfn4}
\textbf{(Prederived polynomial revisited)}
Consider $u$ fusions occurring at the same time with their corresponding $\lambda_1,\ldots,\lambda_u$, and $\xi$s called $\xi_1,\ldots,\xi_u$ and their center of masses $z^{(k_1)},\ldots,z^{(k_u)}$. The polynomial coefficient of the least total power of $\lambda_i$ (given by replacing all $\lambda_i$s with $\lambda$) is called the prederived polynomial for this multiple fusion.
\end{dfn}
\begin{dfn} \label{dfn5}
\textbf{(Derived polynomial revisited)} Same settings as in \hyperref[dfn4]{\textbf{Definition 4}}, if the prederived polynomial is decomposable as $$R(\xi_1,\ldots,\xi_u)P_{\text{der}}(z^{(k_1)},\ldots,z^{(k_u)},\ldots),$$
then $P_{\text{der}}$ is the derived polynomial after those fusions.
\end{dfn}

The notion of fusion process can be similarly redefined and as a result, the notion of UFC becomes a stronger restriction on the polynomial. Note that if $P$ satisfies sUFC, then it follows from the definition that any derived polynomial from $P$ also satisfies sUFC.

In \hyperref[dfn1]{\textbf{Definition 1}}, the polynomial was expanded as $P=\sum \lambda^m Q^m$, but sometimes, this is not the best form of writing $P$ to understand fusion. As translation invariant polynomials will be assumed in the future, $P$ has to be written in a form which shows the ``parts" which are not translation invariant. Take the following algebra 
\begin{align}\label{eq7}
\mathbb{C}[z^{(b_2)}_2-z^{(b_1)}_1, \ldots , z^{(b_i)}_i-z^{(\sum_{1}^{i-1} b_j)}, \ldots,z^{(b_{a})}_{a}-z^{(\sum_{1}^{a-1} b_j)}, z^{(\sum_{1}^a b_j)}],
\end{align}
where the fusion is $(z^{(b_1)}_1,\ldots, z^{(b_{i-1})}_{i-1})\to z^{(\sum_{1}^{i-1} b_j)}$. Notice that the $a$ expressions above are algebraically independent as the $a$ variables $(z^{(b_1)}_1,\ldots, z^{(b_{a})}_a)$ can be obtained linearly. This implies that a homogeneous polynomial in each of those $a$ variables is also a homogeneous polynomial in terms of the other with the same total degree.

The first $a-1$ expressions are translation invariant as $(z^{(b_1)}_1,\ldots, z^{(b_{a})}_a) \rightarrow (z^{(b_1)}_1+c,\ldots, z^{(b_{a})}_a+c)$ for any $c \in \mathbb{C}$ leaves them invariant. The last expression, i.e. $z^{(\sum_{1}^a b_j)}$, changes to $z^{(\sum_{1}^a b_j)}+c$. Further, the first $a-1$ variables in the algebra \hyperref[eq7]{(7)}, can be rewritten using $\xi$ and $\lambda$, and a homogeneous expression with degree $m$ is also homogeneous in $\xi$ with degree $m$ multiplied by $\lambda^m$.

$P$ can now be expressed in the desired form. Write $P=\sum L^t$ where $L^t$ are homogeneous of degree $t$ in the variables that are going to be fused together, in this case $(z^{(b_1)}_1,\ldots,z^{(b_a)}_a)$. Then $L^t$ is also homogeneous of degree $t$ in $(z^{(b_2)}_2-z^{(b_1)}_1, \ldots , z^{(b_i)}_i-z^{(\sum_{1}^{i-1} b_j)}, \ldots,z^{(b_{a})}_{a}-z^{(\sum_{1}^{a-1} b_j)}, z^{(\sum_{1}^a b_j)})$, which for simplicity shall be replaced with $(y_1,\ldots,y_{a-1},z^{(k)})$, where $k=\sum_{1}^a b_j$. We have
\begin{align}\label{eq2}
L^t=\sum_{i=0}^{t_{\text{max}}} B^{t,t-i}_{z^{(b_{a+1})}_{a+1},\ldots,z^{(b_n)}_n}(y_1,\ldots,y_{a-1})(z^{(k)})^i,
\end{align}
where $B^{t,t-i}$ is a homogeneous polynomial of degree $t-i$ in $(y_1,\ldots,y_{a-1})$ and depends on the other variables $(z^{(b_{a+1})}_{a+1},\ldots,z^{(b_n)}_n)$ in the subscript. This subscript is dropped whenever it is clear from the context what the rest of the variables are. As it is clear from the summation, $t_{\text{max}}$ denotes the highest degree of $z^{(k)}$ in $L^t$.

Let us explore, in general, the properties of $S_{z^{(b_1)}_1,\ldots,z^{(b_a)}_a}$ (defined as the \textit{least total degree} of the subscript variables) using the $L^t$ expansion of the polynomial. 

Each $B^{t,t-i}$ has a factor of $\lambda^{t-i}$. With the same notations in \hyperref[dfn1]{\textbf{Definition 1}}, let us take all pairs $(t,i)$ for which $t-i=s$. As $s$ is the minimum power of $\lambda$, it follows $t-i \geq s$ for all $(t,i)$. Hence, if a pair $(t,i)$ satisfies $t-i=s$, it implies $i=t_{\text{max}}$. Therefore, if an expression from $L^t$ is in the prederived polynomial, it must be $B^{t,t-t_{\text{max}}}(z^{(k)})^{t_{\text{max}}}$:
\begin{align}
\lambda^s Q^s=\sum_{(t,t_{\text{max}}) \in S} B^{t,t-t_{\text{max}}}(z^{(k)})^{t_{\text{max}}}
\end{align}
where those pairs $(t,t_{\text{max}})$ for which $t-t_{\text{max}}=s$ form the set $S$. Notice the right side of the above equation is an expansion with respect to $z^{(k)}$ as there is only one pair $(t,t_{\text{max}})$ for any $t$. Assuming a derived polynomial,
\begin{align}\label{eq10}
\lambda^s R(\xi)P_{\text{der}}(z^{(k)},z^{(b_{a+1})}_{a+1},\ldots,z^{(b_n)}_n) = \sum_{(t,i) \in S} B^{t,t-i}(z^{(k)})^{i}.
\end{align}
Expanding $P_{\text{der}}=\sum P_{\text{der}}^i(z^{(b_{a+1})}_{a+1},\ldots,z^{(b_n)}_n) (z^{(k)})^i$ and recalling that $R$ is a polynomial in terms of $y_i$s, we get:
\begin{align}\label{eq11}
\lambda^s R(\xi) P_{\text{der}}^i = B^{t,t-i}, \ \ \ \forall (t,i) \in S.
\end{align}
$L^t$ is the homogeneous parts of $P$ in $z^{(b_1)}_1,\ldots,z^{(b_a)}_a$ with total degree $t$. Hence as $t \geq t-t_{\text{max}} \geq s$, $s$ is at \textbf{most} the \textbf{least} total degree of $z^{(b_1)}_1,\ldots,z^{(b_a)}_a$ in $P$, i.e. $S_{z^{(b_1)}_1,\ldots,z^{(b_a)}_a}$. 

If equality happens, then $t_{\text{max}}=0$ for $t=s$, which means that there is no power of $z^{(k)}$ in $L^{S_{z^{(b_1)}_1,\ldots,z^{(b_a)}_a}}$. Therefore, $L^{S_{z^{(b_1)}_1,\ldots,z^{(b_a)}_a}}$ appears in $Q^s$ as the ``constant" term with respect to $z^{(k)}$, so $z^{(k)} \nmid Q^s$. On the other hand, if $z^{(k)} \nmid Q^s$, then the constant term with respect to $z^{(k)}$ is some $B^{t,t-t_{\text{max}}}$ for $t=s$ where $t_{\text{max}}=0$, giving us the same conclusion $S_{z^{(b_1)}_1,\ldots,z^{(b_a)}_a}=s$.
\begin{thm}\label{thm3.1}
With the same notations as \hyperref[dfn1]{\textbf{Definition 1}}, for all polynomials $P$, we have $s \leq S_{z^{(b_1)}_1,\ldots,z^{(b_a)}_a}$. Equality happens if and only if $z^{(k)} \nmid Q^s$, in which case the constant term with respect to $z^{(k)}$ in $Q^s$ is $$L^{S_{z^{(b_1)}_1,\ldots,z^{(b_a)}_a}}=B^{S_{z^{(b_1)}_1,\ldots,z^{(b_a)}_a},S_{z^{(b_1)}_1,\ldots,z^{(b_a)}_a}}.$$
\end{thm}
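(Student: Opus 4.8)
The plan is to exploit the bigraded decomposition of $P$ set up above, tracking simultaneously the total degree $t$ in the fused variables $(z^{(b_1)}_1,\ldots,z^{(b_a)}_a)$ and the power of $\lambda$. Write $S$ for the least total degree $S_{z^{(b_1)}_1,\ldots,z^{(b_a)}_a}$. The central observation, already recorded in the discussion preceding the statement, is that in the coordinates $(y_1,\ldots,y_{a-1},z^{(k)})$ each homogeneous piece expands as $L^t=\sum_{i=0}^{t_{\text{max}}} B^{t,t-i}(z^{(k)})^i$ with $B^{t,t-i}$ carrying a factor $\lambda^{t-i}$, so that the smallest power of $\lambda$ contributed by $L^t$ equals $t-t_{\text{max}}$ and is carried precisely by the top $z^{(k)}$-monomial $B^{t,t-t_{\text{max}}}(z^{(k)})^{t_{\text{max}}}$.

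First I would establish the inequality. Since $s$ is by definition the global minimum of the powers of $\lambda$ occurring in the expansion, and since for each $t$ with $L^t\neq 0$ the quantity $t-t_{\text{max}}$ is one such power (the coefficient $B^{t,t-t_{\text{max}}}$ being nonzero by the very meaning of $t_{\text{max}}$), I get $s\le t-t_{\text{max}}\le t$, the last step because $t_{\text{max}}\ge 0$. Specialising to $t=S$, which is a nonzero homogeneous piece by the definition of the least total degree, yields $s\le S$ at once.

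Next I would handle the equality statement by pinning it to the vanishing of $t_{\text{max}}$ on the minimal piece $L^S$. If $s=S$, then the chain $s\le S-t_{\text{max}}\le S$ associated to $t=S$ collapses to equalities, forcing $t_{\text{max}}=0$ there; hence $L^S=B^{S,S}$ contains no factor of $z^{(k)}$ and survives in $Q^s$ as its $z^{(k)}$-constant term, so $z^{(k)}\nmid Q^s$, and this same term is identified as $L^S=B^{S,S}$. Conversely, if $z^{(k)}\nmid Q^s$ then the nonzero $z^{(k)}$-constant term of $Q^s$ must originate from a summand $B^{t,t-t_{\text{max}}}(z^{(k)})^{t_{\text{max}}}$ with $t_{\text{max}}=0$; for such a term $s=t-t_{\text{max}}=t$, so $L^t=L^s\neq 0$ and therefore $S\le s$, which combined with the inequality already proved gives $s=S$.

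The homogeneity bookkeeping --- that a degree-$m$ expression in the $y_i$ rewrites as $\lambda^m$ times a degree-$m$ expression in $\xi$, and that only the $i=t_{\text{max}}$ summands can land in the $\lambda^s$-coefficient --- is already discharged in the preceding paragraphs, so no new computation is needed there. The one point I would be most careful about is the equality argument: one must check that the $z^{(k)}$-constant term of $Q^s$ is nonzero exactly when the minimal-degree piece $L^S$ is itself free of $z^{(k)}$, and not merely when some higher piece happens to lose its $z^{(k)}$-dependence. This is precisely what the two specialisations --- to $t=S$ and to the $t_{\text{max}}=0$ summand --- secure, and keeping those two directions cleanly separated is the only genuine subtlety.
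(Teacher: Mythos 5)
Your proposal is correct and follows essentially the same route as the paper: both use the bigraded expansion $L^t=\sum_i B^{t,t-i}(z^{(k)})^i$ with $B^{t,t-i}$ carrying $\lambda^{t-i}$, obtain $s\le S$ by specialising the chain $s\le t-t_{\text{max}}\le t$ to the minimal piece $t=S$, and characterise equality in both directions via the vanishing of $t_{\text{max}}$, i.e.\ the presence of a nonzero $z^{(k)}$-constant term in $Q^s$. No substantive difference from the argument preceding the theorem in the text.
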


Assuming a derived polynomial, \hyperref[eq11]{(11)} gives us
\begin{thm}\label{thm3.2}
For all polynomials $P$ having a derived polynomial, we have $s \leq S_{z^{(b_1)}_1,\ldots,z^{(b_a)}_a}$. Equality happens, if and only if $z^{(k)} \nmid P_{\text{der}}$ in which case the constant term with respect to $z^{(k)}$ in $R(\xi)P_{\text{der}}$ is 
$$R(\xi)P_{\text{der}}^0=L^{S_{z^{(b_1)}_1,\ldots,z^{(b_a)}_a}}=B^{S_{z^{(b_1)}_1,\ldots,z^{(b_a)}_a},S_{z^{(b_1)}_1,\ldots,z^{(b_a)}_a}}$$
\end{thm}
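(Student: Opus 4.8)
The plan is to deduce Theorem~\ref{thm3.2} from Theorem~\ref{thm3.1} by feeding the factorization $Q^{s}=R(\xi)P_{\text{der}}$ of \hyperref[dfn2]{Definition 2} into the three conclusions already established there for an arbitrary $P$. Write $S$ for $S_{z^{(b_1)}_1,\ldots,z^{(b_a)}_a}$. The inequality $s\le S$ is inherited verbatim: Theorem~\ref{thm3.1} proves it without any assumption on whether $Q^{s}$ splits, so nothing new is needed for the first clause, and only the equality criterion and the shape of the constant term actually use the existence of a derived polynomial.

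For the equality criterion I would transport the condition $z^{(k)}\nmid Q^{s}$ across the product $Q^{s}=R(\xi)P_{\text{der}}$. The discussion following~\eqref{eq7} shows that $(y_1,\ldots,y_{a-1},z^{(k)})$ are algebraically independent, so the ambient polynomial ring is a unique factorization domain in which $z^{(k)}$ is a variable, hence prime. Because $R$ is by construction a function of $\xi=(\xi_1,\ldots,\xi_a)$ alone---equivalently of the translation-invariant combinations $y_1,\ldots,y_{a-1}$---it carries no power of $z^{(k)}$, so the nonzero polynomial $R(\xi)$ is not divisible by $z^{(k)}$. Primeness of $z^{(k)}$ then gives $z^{(k)}\mid Q^{s}\iff z^{(k)}\mid P_{\text{der}}$, and combining this equivalence with the equality clause of Theorem~\ref{thm3.1} yields exactly $s=S\iff z^{(k)}\nmid P_{\text{der}}$.

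It remains to pin down the constant term. Expanding $P_{\text{der}}=\sum_{i}P_{\text{der}}^{i}(z^{(k)})^{i}$ as in~\eqref{eq11} and multiplying by $R(\xi)$, which is free of $z^{(k)}$, gives $Q^{s}=\sum_{i}R(\xi)P_{\text{der}}^{i}(z^{(k)})^{i}$; hence the $z^{(k)}$-constant part of $Q^{s}$ is precisely $R(\xi)P_{\text{der}}^{0}$. Since $Q^{s}=R(\xi)P_{\text{der}}$, this is the very constant term that Theorem~\ref{thm3.1} already identifies as $L^{S}=B^{S,S}$ in the equality case, so the claimed chain $R(\xi)P_{\text{der}}^{0}=L^{S}=B^{S,S}$ drops out.

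The only step that warrants explicit care is the transfer of divisibility: it rests entirely on $R$ being genuinely a function of $\xi$ (equivalently of $y_1,\ldots,y_{a-1}$) and therefore free of $z^{(k)}$, which is exactly what the decomposition in \hyperref[dfn2]{Definition 2} asserts. I would state this independence explicitly, since if $R$ were permitted to absorb factors of $z^{(k)}$ the equivalence $z^{(k)}\nmid Q^{s}\Leftrightarrow z^{(k)}\nmid P_{\text{der}}$ would break; with that point secured, everything else is a mechanical rewriting of Theorem~\ref{thm3.1} through the factorization, so no genuine obstacle remains.
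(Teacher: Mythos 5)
Your proposal is correct and follows essentially the same route as the paper, which obtains Theorem~\ref{thm3.2} by substituting the factorization $Q^{s}=R(\xi)P_{\text{der}}$ into the term-by-term identification \hyperref[eq11]{(11)} and reading off the $z^{(k)}$-constant part; your explicit UFD/primeness justification of the divisibility transfer $z^{(k)}\nmid Q^{s}\Leftrightarrow z^{(k)}\nmid P_{\text{der}}$ merely spells out the step the paper leaves implicit in the phrase ``Assuming a derived polynomial, (11) gives us.''
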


As explained in \hyperref[s2]{section 2}, the equality has to happen for the polynomials of interest. It turns out that translation invariant symmetry is the symmetry that will enforce the equality.

Suppose that $Q^s$ is the prederived polynomial in \hyperref[dfn1]{\textbf{Definition 1}}. If $P$ is translation invariant,
$$P(z^{(b_1)}_1,\ldots,z^{(b_n)}_n)=P(z^{(b_1)}_1+c,\ldots,z^{(b_n)}_n+c),$$
which after expansion with respect to $\lambda$ gives
$$Q^m(\xi,z^{(k)},z^{(b_{a+1})}_{a+1},\ldots,z^{(b_n)}_n)= Q^m(\xi,z^{(k)}+c,z^{(b_{a+1})}_{a+1}+c,\ldots,z^{(b_n)}_n+c),$$
where $\xi$ does not change by translation of the variables due to its definition. Therefore, all $Q^m$'s including $Q^s$, are translation invariant, implying that
$$0 \not \equiv Q^m(\xi,z^{(k)},z^{(b_{a+1})}_{a+1},\ldots,z^{(b_n)}_n)=Q^m(\xi,0,z^{(b_{a+1})}_{a+1}-z^{(k)},\ldots,z^{(b_n)}_n-z^{(k)}).$$
Hence, $z^{(k)} \nmid Q^m$. If the derived polynomial exists, as $R$ depends only on $\xi$ which does not change when a translation is performed, $P_{\text{der}}$ is also translation invariant. Therefore, similarly $z^{(k)} \nmid P_{\text{der}}$.
\begin{cor}\label{cor3.3}
For a translation invariant polynomial $P$, we have $s=S_{z^{(b_1)}_1,\ldots,z^{(b_a)}_a}$. The same corollary holds for all the derived polynomials of $P$ obtained from any fusion process.
\end{cor}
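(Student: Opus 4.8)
The plan is to obtain the equality $s = S_{z^{(b_1)}_1,\ldots,z^{(b_a)}_a}$ by verifying the divisibility criterion already supplied by \hyperref[thm3.1]{\textbf{Theorem \ref{thm3.1}}} (and by \hyperref[thm3.2]{\textbf{Theorem \ref{thm3.2}}} in the derived case): since those results reduce the equality to the condition $z^{(k)} \nmid Q^s$ (respectively $z^{(k)} \nmid P_{\text{der}}$), it suffices to show that translation invariance of $P$ forces the lowest-order coefficient $Q^s$ to have a nonzero constant term with respect to the fused center of mass $z^{(k)}$. The whole argument thus reduces to tracking how translation invariance descends through the $\lambda$-expansion.

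First I would record how the substitution $z^{(b_i)}_i = z^{(k)} + \lambda\xi_i$ interacts with a global translation $z^{(b_i)}_i \mapsto z^{(b_i)}_i + c$. Because $z^{(k)}$ is the weighted center of mass of the fused variables it transforms as $z^{(k)} \mapsto z^{(k)} + c$, whereas each $\xi_i$ is unchanged, both $z^{(b_i)}_i$ and $z^{(k)}$ shifting by the same $c$. Substituting the translated variables into $P = \sum_m \lambda^m Q^m$ and comparing coefficients of $\lambda^m$ then shows that every $Q^m$, and in particular $Q^s$, satisfies
\begin{align}
Q^m(\xi,z^{(k)},z^{(b_{a+1})}_{a+1},\ldots) = Q^m(\xi,z^{(k)}+c,z^{(b_{a+1})}_{a+1}+c,\ldots)
\end{align}
for every $c$. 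Setting $c = -z^{(k)}$ removes the explicit $z^{(k)}$ dependence; since $Q^s \not\equiv 0$ by the very definition of $s$ as the least power of $\lambda$ appearing, the resulting polynomial is nonzero, which is precisely the statement that the constant term of $Q^s$ with respect to $z^{(k)}$ does not vanish, i.e.\ $z^{(k)} \nmid Q^s$. Invoking \hyperref[thm3.1]{\textbf{Theorem \ref{thm3.1}}} then gives equality.

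For the derived polynomial and for the iterated claim over a fusion process, the key extra input is that translation invariance is inherited under fusion. When a derived polynomial exists, $Q^s = R(\xi)P_{\text{der}}$ with $R$ depending only on the translation-invariant data $\xi$, so the invariance of $Q^s$ passes to $P_{\text{der}}$ in its own variable set $(z^{(k)},z^{(b_{a+1})}_{a+1},\ldots)$, and \hyperref[thm3.2]{\textbf{Theorem \ref{thm3.2}}} delivers equality in the derived case. I would then finish by induction on the number of fusions: each derived polynomial produced from a translation invariant polynomial is again translation invariant, with the newly created center of mass treated as an ordinary variable, so the hypothesis persists at every stage. The point that demands the most care is this inductive step---one must confirm that the translation used after a fusion, namely the simultaneous shift of $z^{(k)}$ together with all surviving variables, is exactly the translation invariance required to re-apply \hyperref[thm3.1]{\textbf{Theorem \ref{thm3.1}}} at the next stage, and that a simultaneous multiple fusion in the sense of \hyperref[dfn4]{\textbf{Definition \ref{dfn4}}} is covered by running the same coefficient-comparison across all the $\lambda_i$ at once.
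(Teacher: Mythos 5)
Your argument is correct and is essentially the paper's own proof: you push translation invariance through the $\lambda$-expansion to conclude that each $Q^m$ (and, via the factor $R(\xi)$, the derived polynomial) is translation invariant, use the shift $c=-z^{(k)}$ to rule out $z^{(k)}\mid Q^s$, and then invoke Theorems \ref{thm3.1} and \ref{thm3.2} to get the equality $s=S_{z^{(b_1)}_1,\ldots,z^{(b_a)}_a}$. The inductive propagation of translation invariance along a fusion process is likewise exactly how the paper justifies the second sentence of the corollary.
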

\begin{rmk}
In the special case of one variable as the final configuration of a fusion process a constant polynomial is obtained (as it has to be translation invariant).
\end{rmk}
We would like to comment on the claim (f) in \hyperref[s2]{section 2}. The argument made in \cite{PhysRevB.77.235108} for the claim does not use anything more than the sUFC translation invariant condition. Even assuming homogeneity for the polynomial, there is a counter-example to the claim. It should be noted that the polynomial is not part of an $n$-CF $\nu$-sequence and as numerical evidence is in support of claim (f), it is reasonable to expect that $n$-CF and the relationship between the polynomials in the $\nu-$sequence should play a role.

Consider the symmetric homogeneous translation invariant polynomial $$P=(x-(x+y+z)/3)(y-(x+y+z)/3)(z-(x+y+z)/3).$$ 
This polynomial is obviously UFC (indeed, any three variable translation invariant polynomial can be easily seen to be UFC). Expanding gives
$$\frac{2x^3}{27}-\frac{x^2y}{9}-\frac{x^2z}{9}-\frac{xy^2}{9}+\frac{4xyz}{9}-\frac{xz^2}{9}+\frac{2y^3}{27}-\frac{y^2z}{9}-\frac{yz^2}{9}+\frac{2z^3}{27}.$$
While $S_1=S_2=0$, we have $S_3=3$. Hence the claim (f) being $S_{3a}-S_{a} \equiv 0 \pmod 2$ does not hold.

\section{Revision of $D_{a,b}$ and its properties}\label{s4}

We will start with a general polynomial and consider restrictions along the path; until all the \textit{necessary} requirements are found for a notion of $D_{a,b}$ satisfying the desired properties outlined in \hyperref[s2]{section 2}. In a \textit{two-fusion}, using the algebra in \hyperref[eq7]{(7)}:
\begin{align}\label{eq12}
P(z^{(a)},z^{(b)},z^{(c)},\ldots)=(z^{(a)}-z^{(b)})^{D_{a,b}} \Big( P^{a,b}_{\text{der}}(z^{(a+b)},z^{(c)},\ldots)
\end{align}
$$+\sum_{i>0}(z^{(a)}-z^{(b)})^{i}P^{a,b}_i(z^{(a+b)},z^{(c)},\ldots) \Big ),$$
where $D_{a,b}$ is the smallest power of that term. Note that for a two-fusion, there is always a derived polynomial. Next, consider the \textit{three-fusion} 
$$F_{a,b,c}: (z^{(a)},z^{(b)},z^{(c)},\ldots) \rightarrow (z^{(a+b+c)},\ldots)$$
for which $P$ is assumed to have a derived polynomial. In this fusion, instead of using \hyperref[eq7]{(7)}, the algebra
\begin{align}\label{eq13}
\mathbb{C}[z^{(a)}-z^{(b)},z^{(a)}-z^{(c)},z^{(a+b+c)},\ldots]
\end{align}
will be used. Notice $\mathbb{C}[z^{(a)}-z^{(b)},z^{(a)}-z^{(c)}]=\mathbb{C}[\lambda,\xi]$ and $z^{(a)}-z^{(b)},z^{(a)}-z^{(c)},z^{(b)}-z^{(c)}$ give each one power of $\lambda$. Using the notations in \hyperref[eq12]{(12)}, they divide $P$ with order $D_{a,b},D_{a,c}$ and $D_{b,c}$. Hence,
\begin{align}\label{eq14}
P=(z^{(a)}-z^{(b)})^{D_{a,b}}(z^{(a)}-z^{(c)})^{D_{a,c}}(z^{(b)}-z^{(c)})^{D_{b,c}} 
\end{align}
$$\Big ( R(z^{(a)}-z^{(b)},z^{(a)}-z^{(c)})P^{a,b,c}_{\text{der}}(z^{(a+b+c)},\ldots)+ \text{h.o.t. in $\lambda$} \Big ),$$
where $R$ is a homogeneous polynomial in $(z^{(a)}-z^{(b)},z^{(a)}-z^{(c)})$ with total degree $s_{a,b,c}-D_{a,b}-D_{a,c}-D_{b,c}$, and $s_{a,b,c}$ is the lowest power of $\lambda$ in the three-fusion. Next, we consider the three-fusion carried out in the following way:
$$F_{a,(b,c)}=(z^{(a)},z^{(b)},z^{(c)},\ldots) \rightarrow (z^{(a)},z^{(b+c)},\ldots) \rightarrow (z^{(a+b+c)},\ldots).$$
UFC is not assumed, so the derived polynomial from this fusion process could be different. For the first step:
\begin{align}\label{eq15}
P(z^{(a)},z^{(b)},z^{(c)},\ldots)=(z^{(b)}-z^{(c)})^{D_{b,c}}\Big (P^{b,c}_{\text{der}}(z^{(a)},z^{(b+c)},\ldots)
\end{align}
$$+\sum_{i>0}(z^{(b)}-z^{(c)})^{i}P^{b,c}_i(z^{(a)},z^{(b+c)},\ldots) \Big ).$$

As for the second step, consider the expansion of $P^{b,c}_{\text{der}}(z^{(a)},z^{(b+c)},\ldots)$ and also the expansion of $P^{b,c}_i(z^{(a)},z^{(b+c)},\ldots)$ with respect to $(z^{(a)}-z^{(b+c)})$:
\begin{align}\label{eq16}
P^{b,c}_{\text{der}}(z^{(a)},z^{(b+c)},\ldots)=(z^{(a)}-z^{(b+c)})^{D_{a,b+c}}P^{a,(b,c)}_{\text{der}}(z^{(a+b+c)},\ldots)
\end{align}
$$+ \text{h.o.t in $(z^{(a)}-z^{(b+c)})$},$$
\begin{align}\label{eq17}
P^{b,c}_i(z^{(a)},z^{(b+c)},\ldots)=(z^{(a)}-z^{(b+c)})^{D_{a,b+c}-i-l_{\text{min},i}}P^{a,(b,c)}_i(z^{(a+b+c)},\ldots)
\end{align}
$$+ \text{h.o.t in $(z^{(a)}-z^{(b+c)})$},$$
where $D_{a,b+c},D_{a,b+c}-i-l_{\text{min},i}$ are the maximum powers of $(z^{(a)}-z^{(b+c)})$ dividing $P^{b,c}_{\text{der}}$ and $P^{b,c}_i$, respectively.

Assume $l_{\text{min},i} > 0$ for some $i>0$. Consider the \textbf{elementary basis} for the homogeneous polynomials in $\lambda,\xi$ after expanding $P$ using \hyperref[eq16]{(16)} and \hyperref[eq17]{(17)}. The homogeneous parts in $z^{(b)}-z^{(c)},z^{(a)}-z^{(b+c)}$ (which expands the same algebra as $z^{(a)}-z^{(b)},z^{(a)}-z^{(c)}$) with total degree $s_{a,b,c}$ should have all coefficient equal to $P_{\text{der}}(z^{(a+b+c)},\ldots)$ up to some scalar. Thus,
\begin{gather}\label{eq18}
\scalebox{0.9}{$
(z^{(a)}-z^{(b)})^{D_{a,b}}(z^{(a)}-z^{(c)})^{D_{a,c}}(z^{(b)}-z^{(c)})^{D_{b,c}}R(z^{(a)}-z^{(b)},z^{(a)}-z^{(c)})=$}
\end{gather}
\begin{gather*}
\scalebox{0.9}{$(z^{(b)}-z^{(c)})^{D_{b,c}}\Big (\sum\limits_{\{ i| D_{a,b+c}-D_{b,c}-l_{\text{min},i}=s_{a,b,c} \}}c_i(z^{(b)}-z^{(c)})^{i}(z^{(a)}-z^{(b+c)})^{D_{a,b+c}-i-l_{\text{min},i}} \Big)$}
\end{gather*}
where as explained, the equality comes from the fact that $P^{a,b,c}_{\text{der}}(z^{(a+b+c)},\ldots)= P^{a,(b,c)}_i(z^{(a+b+c)},\ldots)$ for all those $i \in \{ i| D_{a,b+c}-D_{b,c}-l_{\text{min},i}=s_{a,b,c} \}$. 

If for all $i>0$ we have $l_{\text{min},i} \le 0$, repeating the same procedure,
$$
(z^{(a)}-z^{(b)})^{D_{a,b}}(z^{(a)}-z^{(c)})^{D_{a,c}}(z^{(b)}-z^{(c)})^{D_{b,c}}R(z^{(a)}-z^{(b)},z^{(a)}-z^{(c)})=
$$
\begin{align}\label{eq19}
(z^{(b)}-z^{(c)})^{D_{b,c}}\Big (\sum_{i \geq 0} c_i(z^{(b)}-z^{(c)})^{i}(z^{(a)}-z^{(b+c)})^{D_{a,b+c}-i} \Big).
\end{align}
The difference from the previous case is the contribution from $P^{b,c}_{\text{der}}$ to the right side of the above equation and so $c_0 \neq 0$, i.e. the highest power of $(z^{(a)}-z^{(b+c)})$ in the above equation is $D_{a,b+c}$.
Both sides in \hyperref[eq19]{(19)} are homogeneous polynomials in variables that are related. Denote $(z^{(a)}-z^{(b)},z^{(a)}-z^{(c)})$ by $(x,y)$ and $(z^{(b)}-z^{(c)},z^{(a)}-z^{(b+c)})=(w,t)$. The following hold
$$x=t-\frac{c}{b+c}w, \ \ y=t+\frac{b}{b+c}w$$
$$\implies y-x=w$$
For both cases (\hyperref[eq18]{(18)} and \hyperref[eq19]{(19)}), we have (respectively):
\begin{align}\label{eq20}
(x)^{D_{a,b}}(y)^{D_{a,c}}(y-x)^{D_{b,c}}R(x,y)=
\end{align}
$$w^{D_{b,c}} \Big (\sum_{ \{ i| D_{a,b+c}-D_{b,c}-l_{\text{min},i}=s_{a,b,c} \}} c_i(w)^{i}(t)^{D_{a,b+c}-i-l_{\text{min},i}} \Big),$$
\begin{align}\label{eq21}
(x)^{D_{a,b}}(y)^{D_{a,c}}(y-x)^{D_{b,c}}R(x,y)=w^{D_{b,c}} \Big (\sum_{i \geq 0} c_i(w)^{i}(t)^{D_{a,b+c}-i} \Big),
\end{align}
In the case of \hyperref[eq20]{(20)},
$$
(t-\frac{c}{b+c}w)^{D_{a,b}}(t+\frac{b}{b+c}w)^{D_{b,c}}w^{D_{b,c}}R(t-\frac{c}{b+c}w,t+\frac{b}{b+c}w)=$$
\begin{align}\label{eq22}
w^{D_{b,c}} \Big (\sum_{ \{ i| D_{a,b+c}-D_{b,c}-l_{\text{min},i}=s_{a,b,c} \}} c_i(w)^{i}(t)^{D_{a,b+c}-i-l_{\text{min},i}} \Big).
\end{align}
On the left side of the equation, the highest degree of $t$ is at least $D_{a,b}+D_{a,c}$. On the right side, the highest power of $t$ is at most the minimum of $D_{a,b+c}-i-l_{\text{min},i} \le D_{a,b+c}-l_{\text{min},i} < D_{a,b+c}$. Notice that the previous inequalities could have been an equality, i.e. $D_{a,b+c}$ is the highest power of $t$, if the case \hyperref[eq21]{(21)} held true as $c_0 \neq 0$. Thus, in both cases,
\begin{align}\label{eq23}
D_{a,b+c} \geq D_{a,b}+D_{a,c}.
\end{align}
Recall that the only assumption on the starting polynomial $P(z^{(a)},$ $z^{(b)},z^{(c)},\ldots)$ was the existence of a derived polynomial for the fusion $(z^{(a)},z^{(b)},z^{(c)},\ldots) \rightarrow (z^{(a+b+c)},\ldots)$. In fact, even this condition can be seen to be unnecessary and \textbf{any} polynomial $P$ satisfies \hyperref[eq23]{(23)}. The change that needs to be made is in \hyperref[eq22]{(22)}, where one has to replace the scalar $c_i$'s by polynomials in $(z^{(a+b+c)},\ldots)$ and replace $R$ with a summation of the form
$$\sum R_i(t-\frac{c}{b+c}w,t+\frac{b}{b+c}w)Q_i(z^{(a+b+c)},\ldots)$$
where $R_i$'s are homogeneous polynomials in $(x,y)=(t-\frac{c}{b+c}w,t+\frac{b}{b+c}w)$. 
\begin{thm}\label{thm4.1} \textbf{(Concavity condition)}
For any polynomial $P(z^{(a)},z^{(b)},z^{(c)},\ldots)$,
$$D_{a,b+c} \geq D_{a,c}+D_{a,b},$$
where $D_{a,b}$ is the power of $(z^{(a)}-z^{(b)})$ dividing $P$ (similarly for $D_{a,c}$), and $D_{a,b+c}$ is the power of $(z^{(a)}-z^{(b+c)})$ dividing the derived polynomial from the fusion of $z^{(b)},z^{(c)}$.
\end{thm}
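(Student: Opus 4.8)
The plan is to compare two descriptions of the same polynomial $P$ obtained by fusing $z^{(a)},z^{(b)},z^{(c)}$ in two different ways and then to read off the inequality from a degree count. First I would work in the translation-invariant algebra $\mathbb{C}[z^{(a)}-z^{(b)},z^{(a)}-z^{(c)},z^{(a+b+c)},\ldots]$, noting that the three pairwise differences $z^{(a)}-z^{(b)}$, $z^{(a)}-z^{(c)}$, $z^{(b)}-z^{(c)}$ each carry one power of $\lambda$ and divide $P$ to the orders $D_{a,b}$, $D_{a,c}$, $D_{b,c}$ respectively. Factoring out these maximal powers writes $P$ as $(z^{(a)}-z^{(b)})^{D_{a,b}}(z^{(a)}-z^{(c)})^{D_{a,c}}(z^{(b)}-z^{(c)})^{D_{b,c}}$ times $R\cdot P^{a,b,c}_{\text{der}}$ plus higher-order terms, where $R$ is homogeneous of degree $s_{a,b,c}-D_{a,b}-D_{a,c}-D_{b,c}$ and $s_{a,b,c}$ is the lowest power of $\lambda$ in the all-at-once three-fusion.

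Next I would carry out the fusion in the two-step order $F_{a,(b,c)}$: first fuse $z^{(b)},z^{(c)}$, extracting $(z^{(b)}-z^{(c)})^{D_{b,c}}$ together with the derived polynomial $P^{b,c}_{\text{der}}(z^{(a)},z^{(b+c)},\ldots)$, and then expand in powers of $(z^{(a)}-z^{(b+c)})$, whose lowest exponent is by definition $D_{a,b+c}$. Matching the homogeneous part of lowest total degree $s_{a,b,c}$ coming from the two descriptions — these must agree up to a scalar since each computes the leading coefficient — yields an identity whose left side is $(z^{(a)}-z^{(b)})^{D_{a,b}}(z^{(a)}-z^{(c)})^{D_{a,c}}(z^{(b)}-z^{(c)})^{D_{b,c}}R$ and whose right side is $(z^{(b)}-z^{(c)})^{D_{b,c}}$ times a sum of monomials in $(z^{(b)}-z^{(c)})$ and $(z^{(a)}-z^{(b+c)})$ whose $(z^{(a)}-z^{(b+c)})$-exponent never exceeds $D_{a,b+c}$.

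The decisive step is a change of variables. Setting $(x,y)=(z^{(a)}-z^{(b)},z^{(a)}-z^{(c)})$ and $(w,t)=(z^{(b)}-z^{(c)},z^{(a)}-z^{(b+c)})$, the linear relations $x=t-\tfrac{c}{b+c}w$, $y=t+\tfrac{b}{b+c}w$, $w=y-x$ convert the identity into one between homogeneous polynomials in $(w,t)$. I would then compare the highest power of $t$ on each side: on the left, after $(y-x)^{D_{b,c}}=w^{D_{b,c}}$ is pulled out, the factor $x^{D_{a,b}}y^{D_{a,c}}$ alone forces a $t$-degree of at least $D_{a,b}+D_{a,c}$, while on the right every term is bounded above by $t^{D_{a,b+c}}$. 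Equating the maximal $t$-degrees then gives $D_{a,b+c}\geq D_{a,b}+D_{a,c}$.

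The main obstacle I anticipate is controlling the higher-order terms $P^{b,c}_i$, since they can shift the $(z^{(a)}-z^{(b+c)})$-exponents by the integers $l_{\text{min},i}$ and thereby complicate the top-degree comparison. I would split into the case where some $l_{\text{min},i}>0$ — where the right-hand bound is strict but the same degree count survives — and the case where all $l_{\text{min},i}\le 0$, in which the contribution of $P^{b,c}_{\text{der}}$ itself guarantees a nonzero $t^{D_{a,b+c}}$ coefficient. Finally, to remove the standing assumption that $P$ has a three-fusion derived polynomial, I would promote the scalar coefficients to polynomials in the surviving variables and replace $R$ by a sum $\sum R_i(x,y)Q_i(z^{(a+b+c)},\ldots)$ with each $R_i$ homogeneous in $(x,y)$; since the comparison of $t$-degrees is insensitive to this replacement, the inequality holds for an arbitrary polynomial $P$.
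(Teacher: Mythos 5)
Your proposal follows the paper's own proof essentially step for step: the same algebra $\mathbb{C}[z^{(a)}-z^{(b)},z^{(a)}-z^{(c)},z^{(a+b+c)},\ldots]$, the same comparison of the all-at-once three-fusion with the two-step fusion $F_{a,(b,c)}$, the same change of variables to $(w,t)$ with the $t$-degree count, the same case split on the signs of $l_{\text{min},i}$, and the same final replacement of the scalars $c_i$ and of $R$ by $\sum R_i Q_i$ to drop the derived-polynomial assumption. The argument is correct and no further comparison is needed.
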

Suppose that $P$ is symmetric and $a=b=c$. In this case, \hyperref[eq12]{(12)} gives $D_{a,a} \equiv 0 \pmod 2$ by considering the transposition $z^{(a)}_1 \leftrightarrow z^{(a)}_2$.
\begin{thm}\label{thm4.2}\textbf{(Evenness condition)}
For every symmetric polynomial $P$ with two $a$-type particles, we have $D_{a,a} \equiv 0 \pmod 2$.
\end{thm}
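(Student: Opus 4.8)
The plan is to apply the two-fusion decomposition \hyperref[eq12]{(12)} to the fusion of the two $a$-type particles $z^{(a)}_1,z^{(a)}_2 \to z^{(2a)}$ and then read off the parity of $D_{a,a}$ from how this decomposition transforms under the transposition $\tau:z^{(a)}_1 \leftrightarrow z^{(a)}_2$. Concretely, \hyperref[eq12]{(12)} gives
\begin{align}\label{eq:daadec}
P=(z^{(a)}_1-z^{(a)}_2)^{D_{a,a}}\Big(P^{a,a}_{\text{der}}(z^{(2a)},\ldots)+\sum_{i>0}(z^{(a)}_1-z^{(a)}_2)^{i}P^{a,a}_i(z^{(2a)},\ldots)\Big),
\end{align}
where $D_{a,a}$ is by definition the smallest power of $(z^{(a)}_1-z^{(a)}_2)$ dividing $P$, so that $P^{a,a}_{\text{der}}\not\equiv 0$. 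The whole point is that the symmetry of $P$ pins down the sign that $\tau$ contributes to the lowest-order term, and that sign is exactly $(-1)^{D_{a,a}}$.

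First I would record how $\tau$ acts on the generators of the algebra \hyperref[eq7]{(7)} used to write \hyperref[eq:daadec]{(\ref{eq:daadec})}, namely $z^{(a)}_1-z^{(a)}_2$ and $z^{(2a)}$ together with the spectator variables. Since the center of mass $z^{(2a)}=\tfrac{1}{2}(z^{(a)}_1+z^{(a)}_2)$ and all remaining variables are fixed by $\tau$, while $z^{(a)}_1-z^{(a)}_2\mapsto -(z^{(a)}_1-z^{(a)}_2)$, it follows that $\tau$ fixes each coefficient $P^{a,a}_{\text{der}}$ and $P^{a,a}_i$ (they depend only on $\tau$-invariant variables) and multiplies the factor $(z^{(a)}_1-z^{(a)}_2)^{j}$ by $(-1)^{j}$. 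Next, because these generators are algebraically independent, the expansion of $P$ in powers of $(z^{(a)}_1-z^{(a)}_2)$ with coefficients in $\mathbb{C}[z^{(2a)},\ldots]$ is unique, so I may compare coefficients term by term. Applying $\tau$ to \hyperref[eq:daadec]{(\ref{eq:daadec})} and using $\tau(P)=P$, the coefficient of the lowest power $(z^{(a)}_1-z^{(a)}_2)^{D_{a,a}}$ yields $P^{a,a}_{\text{der}}=(-1)^{D_{a,a}}P^{a,a}_{\text{der}}$, and since $P^{a,a}_{\text{der}}\not\equiv 0$ this forces $(-1)^{D_{a,a}}=1$, i.e. $D_{a,a}\equiv 0 \pmod 2$.

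I expect the only delicate points to be the two invariance-and-uniqueness facts just invoked: that $P^{a,a}_{\text{der}}$ (and the higher $P^{a,a}_i$) are genuinely $\tau$-invariant, and that the decomposition in \hyperref[eq:daadec]{(\ref{eq:daadec})} is unique so that matching the single lowest-order coefficient is legitimate. Both reduce to the algebraic independence of $z^{(a)}_1-z^{(a)}_2$ and $z^{(2a)}$ noted after \hyperref[eq7]{(7)}, so no hard estimate is needed; once they are spelled out the parity count is immediate. It is worth emphasizing that nothing beyond symmetry and the existence of the two-fusion decomposition is used, consistent with the fact that \hyperref[eq12]{(12)} holds for any two-fusion; in particular no translation invariance or sUFC hypothesis is required for this statement.
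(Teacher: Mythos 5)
Your proof is correct and is essentially the paper's own argument: the paper proves Theorem 4.2 in one line by applying the decomposition \hyperref[eq12]{(12)} and ``considering the transposition $z^{(a)}_1 \leftrightarrow z^{(a)}_2$,'' which is exactly the sign-comparison you carry out. You have merely spelled out the details (invariance of $z^{(2a)}$ and the coefficients, uniqueness of the expansion) that the paper leaves implicit.
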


Assume a sUFC translation invariant polynomials $P(z_1,\ldots,z_n)$ where all particles are of type one. Occasionally, we will also suppose that $P$ is homogeneous if needed to. As the polynomial is translational invariant and symmetric, \hyperref[cor3.3]{\textbf{Corollary 3.3}} gives $S_a=$ (minimum total power of $a$ variables) as a well-defined notion. Note that $S_1=0$ due to translation invariance.

We wish to prove $D_{a,b}=S_{a+b}-S_a-S_b$. $D_{a,b}$ for all $a,b$ needs to be defined consistently. Previously, this notation was used when an $a$-type and a $b$-type particle existed from the start. One way to define this notion is to consider the derived polynomial of the fusion $(z_1,\ldots,z_n) \rightarrow (z^{(a)},z^{(b)},z_{a+b+1},\ldots,z_n)$. Then define $D_{a,b}$ as the order of $(z^{(a)}-z^{(b)})$ dividing the derived polynomial $P_{\text{der}}(z^{(a)},z^{(b)},z_{a+b+1},\ldots,z_n)$. But this needs to be consistent among \textbf{all} derived polynomials of $P$ which have an $a$-type and a $b$-type particle. One can construct an example where this notion is not well-defined. Consider the sUFC translation invariant homogeneous polynomial:
$$P(z_1,z_2,z_3,z_4)=\sum_{k<l}\frac{\prod_{i<j}(z_i-z_j)^2}{(z_k-z_l)^2}.$$
To check that the polynomial is UFC, it only needs to be done for the three-fusion and the fusion $(z_1,z_2,z_3,z_4)\to (z^{(2)}_1,z_3,z_4) \to (z^{(3)}_1,z_4)$ and both give the derived polynomial $(z^{(3)}-z_4)^6$. It is clear that $D_{1,1}=0$ due to the division performed. Further, one can see that $P_{\text{der}}(z^{(2)}_1,z_3,z_4)=(z_1^{(2)}-z_3)^4(z_1^{(2)}-z_4)^4(z_3-z_4)^2$. The order of $(z_3-z_4)$ is $2$. Thus, there can be no consistent notion of $D_{1,1}$ in this case. This condition must be imposed on the polynomials:
\begin{dfn}\label{dfn6}(\textbf{Unique Local Condition (ULC)})
Consider a sUFC translation invariant polynomial $P=P(z_1,\ldots,z_n)$ where all particles are of of type $1$. If any of the derived polynomials which has an $a$-type particle and a $b$-type particle is divisible by $(z^{(a)}-z^{(b)})$ with a maximum order of $D_{a,b}$, then $P$ satisfies ULC.
\end{dfn}
Similar to claim (f) analyzed in the previous section, it could be the case that the yet undiscovered relation between the polynomials in an $n$-CF $\nu$-sequence provides the above condition automatically.

To prove $D_{a,b}=S_{a+b}-S_a-S_b$, let us only assume translation invariance and symmetry and the existence of $P_{\text{der}}(z^{(a)},z_{a+1},\ldots)$. Let $D_{a,1}$ to be the power of $z^{(a)}-z_{a+1}$ dividing the derived polynomial. The inequality
$$D_{a,1} \geq S_{a+1}-S_{a}-S_1=S_{a+1}-S_{a},$$ 
can then be proven. Take the expansion:
$$P= R(z_2-z_1,z_3-z^{(2)},\ldots,z_a-z^{(a-1)})P_{\text{der}}(z^{(a)},z_{a+1},\ldots) + \text{h.o.t in $\lambda$},$$
where $R$ is a homogeneous polynomial of degree $S_a$. Taking the fusion
\begin{align}\label{eq24}
P_{\text{der}}(z^{(a)},z_{a+1},\ldots)=(z_{a+1}-z^{(a)})^{D_{a,1}}P_{\text{der}}(z^{(a+1)},z_{a+2},\ldots)
\end{align}
$$+ \text{h.o.t in $(z_{a+1}-z^{(a)})$}.$$
As $R$ is multiplied by $(z_{a+1}-z^{(a)})^{D_{a,1}}$, the order of $\lambda$ becomes $D_{a,1}+S_a$ (when the fusion of the first $a+1$ variables is taken). The least power of $\lambda$ in $P$ is $S_{a+1}$ implying
$$D_{a,1}+S_a \geq S_{a+1} \implies D_{a,1} \geq S_{a+1}-S_{a}.$$

Assuming $P$ is sUFC and homogeneous as well, $P_{\text{der}}(z^{(a+1)},z_{a+2},\ldots)$ has total degree $\text{deg}(P)-S_{a}-D_{a,1}$. This derived polynomial is same as the derived polynomial obtained by \textbf{directly} computing the fusion of the first $a+1$ variables which means it has total degree $\text{deg}(P)-S_{a+1}$. Thus,
$$\text{deg}(P)-S_{a+1}=\text{deg}(P)-S_{a}-D_{a,1} \implies D_{a,1}=S_{a+1}-S_{a}.$$

Finally, assume that $P$ also satisfies ULC. Starting with $P_{\text{der}}(z^{(a)},\ldots)$, inductively fuse the variables $(z_{a+1},\ldots,z_{a+b})$. In the first step, the derived polynomial $P_{\text{der}}(z^{(a)},z^{(2)},z_{a+3},\ldots)$ is obtained with degree $\text{deg}(P)-S_{a}-D_{1,1}$. The next step is the fusion of $z^{(2)},z_{a+3}$ giving the derived polynomial $P_{\text{der}}(z^{(a)},z^{(3)},\ldots)$ with degree $\text{deg}(P)-S_{a}-D_{1,1}-D_{2,1}$. Repeating this procedure, it is easy to see that the polynomial $P_{\text{der}}(z^{(a)},z^{(b)},z_{a+b+1},\ldots)$ has degree:
$$\text{deg}(P)-S_{a}-\sum_{i=1}^{b-1}D_{i,1}$$
But as $D_{i,1}= S_{i+1}-S_{i}$, the degree is
$$\text{deg}(P)-S_{a}-S_{b}.$$
At the final step, fusing $z^{(a)},z^{(b)}$ gives $P_{\text{der}}(z^{(a+b)},z_{a+b+1},\ldots)$ with degree
$$\text{deg}(P)-S_{a}-S_{b}-D_{a,b},$$
as the order of $(z^{(a)}-z^{(b)})$ dividing $P_{\text{der}}(z^{(a)},z^{(b)},z_{a+b+1},\ldots)$ is $D_{a,b}$. But one can also get $P_{\text{der}}(z^{(a+b)},z_{a+b+1},\ldots)$ directly from one single fusion and the degree should be $\text{deg}(P)-S_{a+b}$. This implies
$$\text{deg}(P)-S_{a+b}=\text{deg}(P)-S_{a}-S_{b}-D_{a,b} \implies D_{a,b}=S_{a+b}-S_{a}-S_{b}.$$

All conditions except homogeneity were shown to be necessary to obtain the equality above. If homogeneity is not given, one can think of a counterexample which is sUFC translation invariant and satisfies ULC while $D_{2,1}>S_{3}-S_2$. 

Note that the argument for $D_{i,1} \geq S_{i+1}-S_{i}$ can be applied for non-symmetric translation invariant polynomials where we will have to replace $D_{i,1},S_{i+1},S_{i}$ by $D_{z_{i+1},z^{(i)}},S_{z_1,\ldots,z_{i+1}},S_{z_1,\ldots,z_i}$ which will be dependent on which variables are being fused. Assume a $3$ variable polynomial $P$ which is translation invariant and satisfies $D_{z_3,z^{(2)}} > S_{z_1,z_2,z_3}-S_{z_1,z_2}$. The ``product symmetrization" of this polynomial is
$$SymmP(z_1,z_2,z_3) = \prod_{\sigma \in S_3} P_\sigma=\prod_{\sigma \in S_3} P(\sigma(z_1),\sigma(z_2),\sigma(z_3))$$
$SymmP$ is symmetric and also translation invariant as $P$ is. It is UFC as any three variable translation invariant polynomial is UFC. $SymmP$ also satisfies ULC, as any symmetric translation invariant polynomial with three variables does so.

Finally, the pattern of zeros of $SymmP$ comes from the pattern of zeros of $P$:
$$S^{SymmP}_2 = \sum_\sigma S^{P_\sigma}_{z_1,z_2}= \sum_\sigma S^{P_\sigma}_{z_2,z_3}= \sum_\sigma S^{P_\sigma}_{z_3,z_1},$$
and similarly
$$S^{SymmP}_3 = \sum_\sigma S^{P_\sigma}_{z_1,z_2,z_3},$$
which is obvious due to the definition of the above values as the lowest total power of a number of variables.  Further, the derived polynomials of $SymmP$ are the product of the derived polynomials of $P_\sigma$. And,
$$D^{SymmP}_{1,1}=\sum_\sigma D^{P_\sigma}_{z_1,z_2}=\sum_\sigma D^{P_\sigma}_{z_2,z_3}= \sum_\sigma D^{P_\sigma}_{z_3,z_1}.$$
Similarly, 
$$D^{SymmP}_{2,1}=\sum_\sigma D^{P_\sigma}_{z^{(2)}_1,z_3}=\sum_\sigma D^{P_\sigma}_{z^{(2)}_2,z_1}= \sum_\sigma D^{P_\sigma}_{z^{(2)}_3,z_2}.$$
As each of the terms in $\sum_\sigma D^{P_\sigma}_{z^{(2)}_1,z_3}$ is at least $S^{P_\sigma}_{z_1,z_2,z_3}-S^{P_\sigma}_{z_1,z_2}$, $$D^{SymmP}_{2,1} \geq S^{SymmP}_3-S^{SymmP}_2.$$
Hence, if for one instance of $\sigma$, one has $$D^{P_\sigma}_{z^{(2)}_1,z_3} > S^{P_\sigma}_{z_1,z_2,z_3}-S^{P_\sigma}_{z_1,z_2}$$ 
then we get a strict inequality for $SymmP$. Now for demonstrating the necessity of homogeneity, the polynomial $P(z_1,z_2,z_3)=y^3+x^2$ where $y=z_3-\frac{z_1+z_2}{2}$ and $x=z_1-z_2$ can be used as an example. This polynomial is translation invariant and if $z_1,z_2$ are fused, the derived polynomial gives $y^3$, so $D^P_{z^{(2)}_1,z_3}=3$ while $S^P_{z_1,z_2,z_3}=2$ and $S^P_{z_1,z_2}=0$.

\section{$\nu$-sequence and $n$-CF}\label{s5}

In this section, we will investigate the claims made at the end of \hyperref[s2]{section 2} on an $n$-CF $\nu$-sequence of polynomials. In the same spirit of \cite{PhysRevB.77.235108}, let us define
\begin{dfn}\label{dfn7}\textbf{($\nu$-sequence)} A sequence of sUFC translation invariant ULC homogeneous polynomials $\{P_k(z_1,\ldots,z_{N_k})\}$ is called a $\nu$-sequence if 
\begin{itemize}
\item for a positive $\nu \in \mathbb{Q}$ we have $\lim_{k \rightarrow \infty} \frac{N_k}{d_k}=\nu$, where $d_k$ is the maximum degree of $z_1$ in $P_k$,
\item $S_a$ for all $P_k$ for which $N_k \geq a$ is the same, so that a sequence of pattern of zeros $\{S_a\}$ can be associated to the sequence of polynomials.
\end{itemize}
\end{dfn}
The sequence $\{ D_{a,b} \}$ can similarly be associated to a $\nu$-sequence satisfying $D_{a,b}=S_{a+b}-S_a-S_b$. Due to concavity,
$$D_{a,b+c} \geq D_{a,b}+D_{a,c} \implies$$
\begin{align}\label{eq25}
S_{a+b+c}+S_{a}+S_{b}+S_{c} \geq S_{a+b}+S_{b+c}+S_{c+a}.
\end{align}
Also, the evenness condition holds
\begin{align}\label{eq26}
D_{a,a} \equiv 0 \pmod 2 \implies S_{2a} \equiv 0 \pmod 2.
\end{align}
Finally, $S_{N_k}=\text{deg}(P_k)$ as $P_k$s are homogeneous. All claims (a)-(c) in \hyperref[s2]{section 2} are therefore already established rigorously. As for claim (f), it was demonstrated in \hyperref[s3]{section 3} that its proof likely requires a deeper understanding of the relations between the polynomials in a $\nu$-sequence.

The definition of $n$-CF is also done similar to the definition in \hyperref[s2]{section 2}, but by simply avoiding the multi-valued issue for the functions.
\begin{dfn}\label{dfn8}(\textbf{$n$-Cluster Form condition ($n$-CF)})
A $\nu$-sequence with $\nu=\frac{n}{m}$ and $N_k=kn$, satisfies the $n$-CF if
\begin{itemize}
    \item $P_k^n=G_k \Phi_{\frac{1}{m}}$ for all $k$, where $\Phi_{\frac{1}{m}}$ is $\prod_{1 \leq i<j \leq nk} (z_i-z_j)^m$ and $G_k$ is a meromorphic function $\frac{P_k^n}{\Phi_{\frac{1}{m}}}$.
    \item The derived polynomial $$G_{k,\text{der}}(\ldots,z^{(n)},\ldots):=\frac{P_{k,\text{der}}(\ldots,z^{(n)},\ldots)^n}{\Phi_{\frac{1}{m},\text{der}}(\ldots,z^{(n)},\ldots)}$$
    is independent of any $n$-type particle $z^{(n)}$, where $P_{k,\text{der}},\Phi_{\frac{1}{m},\text{der}}$ are arbitrary derived polynomials from the same fusion process having an $n$-type variable in the final configuration. 
\end{itemize}
\end{dfn} 

Recall that the derived polynomial of a product is the product of the derived polynomial and its pattern of zeros of is also obtained by adding up the pattern of zeros. Hence, $P_{k,\text{der}}(\ldots,z^{(n)},\ldots)^n$ is the derived polynomial of $P_k^n$ from the same fusion process. Next, more or less the same arguments made in \cite{PhysRevB.77.235108} will be used to prove the recursive relation made in claim (g). First, we prove
\begin{align}\label{eq27}
nD_{a,b}=mab  \ , \ \text{if} \ b \equiv 0 \pmod n.
\end{align}
Take a fusion process with a final configuration $(\ldots,z^{(n)},z^{(a)},\ldots)$. Fusing the two variables $z^{(n)}$ and $z^{(a)}$,
$$G_{k,\text{der}}(\ldots,z^{(n)},z^{(a)},\ldots)=$$
$$\frac{(z^{(n)}-z^{(a)})^{nD_{n,a}}P_{k,\text{der}}(\ldots,z^{(n+a)},\ldots)^n+ \text{h.o.t in $(z^{(n)}-z^{(a)})$}}{(z^{(n)}-z^{(a)})^{man}\Phi_{\frac{1}{m},\text{der}}(\ldots,z^{(n+a)},\ldots)+ \text{h.o.t in $(z^{(n)}-z^{(a)})$}}=$$
$$(z^{(n)}-z^{(a)})^{nD_{a,n}-man}\frac{P_{k,\text{der}}(\ldots,z^{(n+a)},\ldots)^n+\text{h.o.t in $(z^{(n)}-z^{(a)})$}}{\Phi_{\frac{1}{m},\text{der}}(\ldots,z^{(n+a)},\ldots)+\text{h.o.t in $(z^{(n)}-z^{(a)})$}}.$$

As $G_{k,\text{der}}$ is independent of $z^{(n)}$, we deduce $nD_{a,n}-man=0$. Also, by definition
$$G_{k,\text{der}}(\ldots,z^{(n+a)},\ldots)=$$
$$\lim_{z^{(n)} \rightarrow z^{(a)} \text{or} \lambda \rightarrow 0}\frac{P_{k,\text{der}}(\ldots,z^{(n+a)},\ldots)^n+\text{h.o.t in $(z^{(n)}-z^{(a)})$}}{\Phi_{\frac{1}{m},\text{der}}(\ldots,z^{(n+a)},\ldots)+\text{h.o.t in $(z^{(n)}-z^{(a)})$}}$$
The term inside the limit is $G_{k,\text{der}}(\ldots,z^{(n)},z^{(a)},\ldots)$ which is independent of $z^{(n)}$. Assume that $a=n$ and denote the variable $z^{(n)}_2$, not the same as the other $n$-type particle $z^{(n)}_1=z^{(n)}$. One side is $G_{k,\text{der}}(\ldots,z^{(n+n)},\ldots)$ as the other side is a limit of $G_{k,\text{der}}(\ldots,z^{(n)}_1,z^{(n)}_2,\ldots)$, and is independent on both those variables. Therefore, $G_{k,\text{der}}(\ldots,z^{(n+n)},\ldots)$ is independent of $z^{(n+n)}$. As the argument applies to any derived $G_{k,\text{der}}$, any such function is independent of $2n$-type variables. 

Repeating the argument above for $2n$ instead of $n$ and arguing inductively, \hyperref[eq27]{(27)} is proved. 

Another identity is needed to prove the recursive relation:
\begin{align}\label{eq28}
D_{a,b+n}=D_{a,b}+am.
\end{align}
Using again the limit above, one can conclude that as long as $z^{(n)}=z^{(a)}$,
$$G_{k,\text{der}}(\ldots,z^{(n+a)},\ldots)= G_{k,\text{der}}(\ldots,z^{(a)},\ldots).$$ 
To prove \hyperref[eq28]{(28)}, we consider the following equality
$$G_{k,\text{der}}(\ldots,z^{(a)},z^{(n+b)},\ldots)=G_{k,\text{der}}(\ldots,z^{(a)},z^{(b)},\ldots).$$
when $z^{(n)}=z^{(b)}$. Taking the fusion of $z^{(a)},z^{(n+b)}$ on one side and the fusion of $z^{(a)},z^{(b)}$ on the other side:
\begin{gather}\label{eq29}
\scalebox{0.97}{$(z^{(a)}-z^{(n+b)})^{nD_{a,b+n}-ma(b+n)}\frac{P_{k,\text{der}}(\ldots,z^{(a+b+n)},\ldots)^n+\text{h.o.t in $(z^{(a)}-z^{(b+n)})$}}{\Phi_{\frac{1}{m},\text{der}}(\ldots,z^{(a+b+n)},\ldots)+\text{h.o.t in $(z^{(a)}-z^{(b+n)})$}}$}
\end{gather}
\begin{gather*}
\scalebox{0.97}{$=(z^{(a)}-z^{(b)})^{nD_{a,b}-mab}\frac{P_{k,\text{der}}(\ldots,z^{(a+b)},\ldots)^n+\text{h.o.t in $(z^{(a)}-z^{(b)})$}}{\Phi_{\frac{1}{m},\text{der}}(\ldots,z^{(a+b)},\ldots)+\text{h.o.t in $(z^{(a)}-z^{(b)})$}}.$}
\end{gather*}

As $z^{(n)}=z^{(b)}$, i.e. $z^{(b+n)}=z^{(b)}$, comparing the order of $(z^{(a)}-z^{(b)})$ on both sides gives \hyperref[eq28]{(28)}.

The recursive relation for the pattern of zeroes can be proved by combining \hyperref[eq27]{(27)} and \hyperref[eq28]{(28)}. For $a<n$,
$$D_{a,kn}=mak \  \text{from (27)}, D_{a,b+kn}= D_{a,b}+kam \ \text{from (28)}.$$
The first equation gives
$$S_{a+kn}=S_{kn}+S_{a}+kam.$$
To find $S_{kn}$, one can use $S_{kn}=\sum_{1}^{kn-1}D_{1,i}$ and break this summation into $k$ parts for $i$ modulo $n$ and apply $D_{1,b+kn}=D_{1,b}+kma$ as follows
$$S_{kn}= k (\sum_{1}^{n-1} D_{1,i}) + \sum_{1}^{k-1} mn(i)=kS_n+\frac{mnk(k-1)}{2}.$$
Thus, claim (g) is proved:
\begin{align}\label{eq30}
S_{a+kn}=S_{a}+kma+kS_n+\frac{mnk(k-1)}{2}.
\end{align}

Notice \hyperref[eq27]{(27)}, for $a=b=n$, gives $D_{n,n}=mn$. From the evenness condition, $mn$ is even, which demonstrates claim (d). 

The last claim to be discussed is (e): $2S_n \equiv 0 \pmod{n}$. For an $n$-CF $\nu$-sequence, following \cite{PhysRevB.77.235108}, one has to first define the angular momentum of a particle $z^{(a)}$. By considering the $\mathfrak{su}(2)$ action on the polynomials \cite[see Appendix A]{PhysRevB.77.235108}, this was found to be $J_a=aJ-S_a$ where $J$ is half of $d_k$, the highest degree of $z_i$ in $P_k$. Then, by the recursive formula and the fact that the total angular momentum, i.e. $J_{\text{tot}}=N_kJ-S_{N_k}$ must be zero due to translation invariance \cite[see III.D]{PhysRevB.77.235108}, one obtains
$$2J=\frac{2S_{n}}{n}+m(\frac{N_k}{n}-1).$$
As $2J$ and $\frac{N_k}{n}$ are integers, claim (e) follows: $2S_n \equiv 0 \pmod{n}$.

Although almost all the above arguments can be made precise using very much the same ideas, the part of which does not seem to follow mathematical precision is the statement that $J_{\text{tot}}=0$ for such polynomials. In \cite{PhysRevB.77.235108}, only properties of the polynomial itself are used to state the claim, while one can easily find an sUFC translation invariant ULC homogeneous polynomial which does not satisfy $J_{\text{tot}}=0$. Consider for example the same polynomial we used for claim (f):
$$P=(x-(x+y+z)/3)(y-(x+y+z)/3)(z-(x+y+z)/3).$$ 
This gives $J_{\text{tot}}=3\times \frac{3}{2}-3 \neq 0$. Hence, there is need for a better understanding of what the definition of FQH ground state sequence is. In the next section, we analyze the significance of the \textbf{assumption} of $J_{\text{tot}}=0$ on an $n$-CF $\nu$-sequence and how $n$-CF itself can be ``derived" from this assumption.

\section{angular momentum $J_a=aJ_1-S_a$ and $n$-AMS polynomials}\label{s6}

Although the right definition of angular momentum requires an action of $\mathfrak{su}(2)$ on the polynomials, we refer to \cite[see Appendix A]{PhysRevB.77.235108} as the arguments are mathematically precise. In this section, we ``define" $J_a=aJ-S_a$ as the angular momentum of $z^{(a)}$ and explore its relationship with the highest degree of $z^{(a)}$ in $P_{\text{der}}(z^{(a)},\ldots)$.

Let $P$ be sUFC translation invariant. Recall that $S_a$, the least total power of $a$ variables, is equal to the least total power of $\lambda$ when fusing $a$ variables. Define the quantity $J_a := aJ-S_a$, where $J$ is half of the highest degree of $z_i$. 
\begin{thm}\label{thm6.1}
For any sUFC translation invariant polynomial $P(z_1,\ldots,z_N)$,
$$2J_a \geq d_a$$
where $d_a$ is  the highest degree of $z^{(a)}$ in $P_{\text{der}}(z^{(a)},z_{a+1},\ldots,z_N)$.
\end{thm}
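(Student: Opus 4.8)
The plan is to prove the inequality by induction on $a$, fusing the cluster one particle at a time through the process $(z_1,z_2)\to z^{(2)}$, $(z^{(2)},z_3)\to z^{(3)},\ldots$, which by sUFC (\hyperref[dfn3]{Definition 3}) produces the same $P_{\text{der}}(z^{(a)},z_{a+1},\ldots,z_N)$ as fusing all $a$ variables at once, and which keeps every intermediate polynomial translation invariant by \hyperref[cor3.3]{Corollary 3.3}. For a polynomial $Q$ and a chosen set of its variables, write $E_{\max}$ and $E_{\min}$ for the largest and smallest total degree of those variables occurring in $Q$; by \hyperref[cor3.3]{Corollary 3.3} the value $E_{\min}$ for $z_1,\ldots,z_a$ in $P$ is exactly $S_a$. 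The base case $a=1$ is immediate: no fusion occurs, $P_{\text{der}}^{(1)}=P$, so $d_1=\deg_{z_1}P=2J$, while $J_1=J-S_1=J$, giving $d_1=2J_1$.

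For the inductive step set $Q:=P_{\text{der}}^{(a)}$, $x:=z^{(a)}$, $y:=z_{a+1}$, and consider the single fusion $(x,y)\to z^{(a+1)}$, now letting $E_{\max},E_{\min}$ refer to the pair $\{x,y\}$ in $Q$. Expanding as in \hyperref[dfn1]{Definition 1}, a monomial of $Q$ of $\{x,y\}$-degree $E$ contributes to the coefficient of $\lambda^{E_{\min}}$ a term of $z^{(a+1)}$-degree $E-E_{\min}$, so $d_{a+1}\le E_{\max}-E_{\min}$. I would combine this with two routine facts: the spectator bound $\deg_y Q=\deg_{z_{a+1}}Q\le 2J$ (the substitution never touches $z_{a+1}$, so its degree cannot exceed its degree $2J$ in $P$), and the order bound $E_{\min}=D_{a,1}\ge S_{a+1}-S_a$ from \hyperref[s4]{Section 4}. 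With $\deg_x Q=d_a$ and the inductive hypothesis $d_a\le 2J_a$, the remaining ingredient is the two-variable inequality $E_{\max}+E_{\min}\le\deg_xQ+\deg_yQ$; granting it, $d_{a+1}\le E_{\max}-E_{\min}\le \deg_xQ+\deg_yQ-2E_{\min}\le d_a+2J-2(S_{a+1}-S_a)\le 2(a+1)J-2S_{a+1}=2J_{a+1}$.

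The heart of the matter is the two-variable inequality for the translation invariant $Q$. I would prove it by factoring out the maximal power of $(x-y)$: write $Q=(x-y)^{r}Q'$ with $(x-y)\nmid Q'$, where $r=\mathrm{ord}_{(x-y)}Q$. The key claim is that translation invariance forces $Q'$ to contain a monomial free of both $x$ and $y$, so that the minimal $\{x,y\}$-degree of $Q'$ is $0$ and $E_{\min}(Q)=r$. Indeed, if every monomial of $Q'$ contained $x$ or $y$, then $Q'(0,0,z_{a+2},\ldots,z_N)\equiv 0$, and translation invariance would give $Q'(x,x,z_{a+2},\ldots,z_N)=Q'(0,0,z_{a+2}-x,\ldots,z_N-x)\equiv 0$, i.e. $(x-y)\mid Q'$, a contradiction. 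Granting the claim, $E_{\max}(Q)=r+E_{\max}(Q')$, $E_{\min}(Q)=r$, $\deg_xQ=r+\deg_xQ'$ and $\deg_yQ=r+\deg_yQ'$, so the inequality collapses to the trivial $E_{\max}(Q')\le\deg_xQ'+\deg_yQ'$.

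I expect the main obstacle to be exactly this lemma, and within it the identification of $E_{\min}$ with $\mathrm{ord}_{(x-y)}Q$ — this is where translation invariance is indispensable, the statement being false for general symmetric polynomials (e.g. $x^2+y^2$ has minimal joint degree $2$ but is not divisible by $x-y$). One must also watch the degenerate final stage $a+1=N$, where $Q$ depends only on $x-y$ and $Q'$ is a nonzero constant; there the inequalities only become easier. The auxiliary bookkeeping — the spectator degree bound and $D_{a,1}\ge S_{a+1}-S_a$ — is elementary and already available from \hyperref[s4]{Section 4}.
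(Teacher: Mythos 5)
Your proof is correct and is essentially the paper's own argument in different packaging: the same induction fusing one variable at a time, the same use of $D_{a,1}\ge S_{a+1}-S_a$ and the spectator bound $\deg_{z_{a+1}}Q\le 2J$, and your quotient $Q'=Q/(x-y)^r$ is exactly the paper's polynomial $M$, with your chain $d_{a+1}\le E_{\max}-E_{\min}\le \deg_xQ+\deg_yQ-2E_{\min}$ coinciding with the paper's $d_a\le d_{\max}\le m_{a-1}+m_1$. The only genuinely added content is your self-contained proof that $E_{\min}(Q)=\mathrm{ord}_{(x-y)}Q$ for translation invariant $Q$, which the paper instead draws from Theorem \ref{thm3.1} and Corollary \ref{cor3.3}.
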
 
\begin{proof}
Proceeding by induction, for $a=1$, the equality follows from the definitions. To prove the theorem for the fusion of $a=a$ variables, we fuse $z^{(a-1)}$ and $z_{a}$:
\begin{gather}\label{eq31}
\scalebox{0.95}{$P_{\text{der}}(z^{(a-1)},z_{a},\ldots,z_N)=(z^{(a-1)}-z_a)^{D_{a-1,1}}P_{\text{der}}(z^{(a)},z_{a+1},\ldots,z_N)+ \ldots .$}
\end{gather}
Recall
\begin{align}\label{eq32}
D_{a-1,1} \geq S_{a}-S_{a-1},
\end{align}
as proven in \hyperref[s4]{section 4}. Let $d'$ be the highest degree of $z_a$ in $P_{\text{der}}(z^{(a-1)},z_{a},\ldots,z_N)$ which is at most $2J_1=2J$. We want to show that
$$2J_a=2J_{a-1}+2J_{1}-2S_a+2S_{a-1} \geq d_{a-1}+d'-2D_{a-1,1} \geq d_{a}.$$
Except for the right inequality, all the above are already given using the induction hypothesis and \hyperref[eq32]{(32)}. Define
\begin{gather}\label{eq33}
\scalebox{0.9}{$M(z^{(a)},z^{(a-1)}-z_a,z_{a+1},\ldots,z_N)=P_{\text{der}}(z^{(a)},z_{a+1},\ldots,z_N)+ (z^{(a-1)}-z_a)T$}
\end{gather}
where $T$ is some polynomial, to get $P_{\text{der}}(z^{(a-1)},z_{a},\ldots,z_N)=$
\begin{align}\label{eq34}
(z^{(a-1)}-z_a)^{D_{a-1,1}}M(z^{(a)},z^{(a-1)}-z_a,z_{a+1},\ldots,z_N).
\end{align}
Let the highest degree of $z^{(a-1)}$ and $z_a$ in $M$ be $m_{a-1}$ and $m_1$, respectively. From \hyperref[eq34]{(34)}, by counting the highest degree of $z^{(a-1)}$ and $z_a$ in two ways, 
$$m_{a-1} + D_{a-1,1} = d_{a-1},$$
$$m_1 +D_{a-1,1} = d'.$$
Adding up gives $m_{a-1}+m_{1} = d'+d_{a-1}-2D_{a-1,1}$. Therefore,it remains to show that
\begin{align}\label{eq35}
d_{a} \leq m_{a-1}+m_{1}.
\end{align}
Writing $M$ using the $L^t$ form,
\begin{align}\label{36}
M= \sum_{d_{\text{min}}}^{d_{\text{max}}} L^t,
\end{align}
where $L^t$ are the homogeneous ``parts" of $M$ with total degree $t$ in $z^{(a)},z^{(a-1)}-z_a$ or equivalently in $z^{(a-1)},z_a$ (notice $\mathbb{C}[z^{(a)},z^{(a-1)}-z_a]=\mathbb{C}[z^{(a-1)},z_a]$). Also, $d_{\text{min}}$ is the smallest total degree of the homogeneous parts and $d_{\text{min}}$ is the biggest. 

Notice that the polynomial $P_{\text{der}}(z^{(a)},z_{a+1},\ldots,z_N)$ does not depend on the variable $(z^{(a-1)}-z_a)$. Therefore, the highest degree of $z^{(a)}$ in $P_{\text{der}}$ comes from a term in $L^t$ for some $t=t'$, which is independent of $(z^{(a-1)}-z_a)$. In particular, one can conclude that $d_a \leq t'$ for some $t=t'$. But the maximum of $t$ is by definition $d_{\text{max}}$. Thus $d_a \leq d_{\text{max}}$. We also know that $L^{d_{\text{max}}}$ is the homogeneous part of $M$ in $z^{(a-1)},z_a$ with total degree $d_{\text{max}}$. So the highest \textbf{total} degree of $z^{(a-1)},z_a$ in $M$ is by definition $d_{\text{max}}$. Finally, since $m_{a-1},m_1$ are the highest degrees of $z^{(a-1)},z_a$ in $M$,
$$d_a \leq t' \leq d_{\text{max}} \leq m_{a-1}+m_1.$$
This finishes the proof.
\end{proof}
\begin{rmk}
As derived polynomials with an $a$-type variable are descendants of $P_{\text{der}}(z^{(a)},z_{a+1},\ldots)$ (up to some permutation of variables due to symmetry), the result above is true for all derived polynomials having some $a$-type variable. 
\end{rmk}
\begin{rmk}
Assume that $P$ is also homogeneous and satisfies ULC. Taking the derived polynomial $P_{\text{der}}(z^{(a)},z^{(N-a)})$, the following important inequality holds
\begin{align}\label{eq37}
J_N=NJ-S_N=J_{N-a}+J_{a}-(S_N-S_{N-a}-S_{a})
\end{align}
$$=J_{N-a}+J_{a}-D_{N-a,a} \geq 0,$$
as the highest degree of $z^{(N)}$ in $P_{\text{der}}(z^{(N)})$ is indeed $0$ (i.e. it is constant), since $P$ is translation invariant.
\end{rmk}
\begin{dfn}\label{dfn9}
(\textbf{Angular Momentum Symmetry (AMS)}) Let $P$ be an sUFC translation invariant ULC homogeneous polynomial. Then $P$ satisfies AMS if $J_N=0$.
\end{dfn}
The terminology is explained in \hyperref[eq39]{(39)} as it is equivalent to $J_N=0$ for the case $a=0$ in that equation. As $P$ satisfies ULC and is homogeneous,
\begin{align}
P_{\text{der}}(z^{(a)},z^{(N-a)})=c(z^{(a)}-z^{(N-a)})^{D_{N-a,a}}.
\end{align}
Indeed, $P_{\text{der}}$ must be homogeneous of degree $\text{deg}(P)-S_a-S_{N-a}=D_{N-a,a}$ and also divisible by $(z^{(a)}-z^{(N-a)})^{D_{N-a,a}}$ by the definition of $D_{N-a,a}$. \hyperref[thm6.1]{\textbf{Theorem 6.1}} gives
$$2J_{N-a} \geq D_{N-a,a} \ \ , \ \ 2J_a \geq D_{N-a,a},$$
as the highest power of both $z^{(a)}$ and $z^{(N-a)}$ in $P_{\text{der}}(z^{(a)},z^{(N-a)})$ is exactly $D_{N-a,a}$. Therefore, in order to have equality in \hyperref[eq37]{(37)}, the following has to hold 
\begin{align}\label{eq39}
J_{N-a}=\frac{1}{2}D_{N-a,a}=J_a.
\end{align}
This in turn implies that $2J_a$ is equal to the highest degree of $z^{(a)}$ in any derived polynomial depending on an $a$-type particle. Indeed, by the remark after \hyperref[thm6.1]{\textbf{Theorem 6.1}}, $z^{(a)}$ has maximum degree $D_{N-a,a}$ in $P_{\text{der}}(z^{(a)},z^{(N-a)})$, and any derived polynomial having an $a$-type variable has this polynomial as its \textit{descendant}, and the \textit{ancestor} of all derived polynomials having an $a$-type variable satisfies $D_{N-a,a} \le d_a \le 2J_a=D_{N-a,a}$.

We can actually get more out of the equality in \hyperref[eq39]{(39)}. By the definition of $J_a$,
\begin{align}\label{eq40}
aJ-S_a=(N-a)J-S_{N-a} \ \ \ , \ \ \ J_N=0 \implies J=\frac{S_N}{N},
\end{align}
\begin{align}\label{eq41}
\implies S_{a}=S_{N-a}-(N-2a)J=S_{N-a}-\frac{N-2a}{N}S_N.
\end{align}
In particular, as $2J$ is an integer,
\begin{align}\label{eq42}
2J=\frac{2S_N}{N} \in \mathbb{Z} \implies 2S_N \equiv 0 \pmod N.
\end{align}
We note that \hyperref[eq41]{(41)} is supported by numerical evidence for $n$-CF $\nu$-sequences as mentioned \cite[see (48)]{PhysRevB.77.235108}. Let us summarize all the above in the following theorem:
\begin{thm}\label{thm6.2}
If $P$ satisfies AMS, then
\begin{itemize}
    \item $J_N=0$ or equivalently $J=\frac{S_N}{N}$ which implies that $J_a=J_{N-a}=\frac{1}{2}D_{N-a,a}$.
    \item $S_a=S_{N-a}-\frac{N-2a}{N}S_n$.
    \item $J_a$ is half of the highest degree of $z^{(a)}$ in any derived polynomial depending on an $a$-type particle.
\end{itemize}
\end{thm}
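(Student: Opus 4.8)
The plan is to assemble the three assertions from the identities established just before the statement, treating AMS (that is, $J_N=0$) as the single external input. The first equivalence is immediate from the definition $J_N=NJ-S_N$: setting this to zero gives $J=\frac{S_N}{N}$.

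For the equalities $J_a=J_{N-a}=\frac{1}{2}D_{N-a,a}$, I would first pin down the exact shape of the fully fused derived polynomial. Since $P$ is homogeneous and satisfies ULC, $P_{\text{der}}(z^{(a)},z^{(N-a)})$ is homogeneous of degree $\deg(P)-S_a-S_{N-a}=D_{N-a,a}$ and is divisible by $(z^{(a)}-z^{(N-a)})^{D_{N-a,a}}$, which forces $P_{\text{der}}(z^{(a)},z^{(N-a)})=c(z^{(a)}-z^{(N-a)})^{D_{N-a,a}}$. Consequently $z^{(a)}$ and $z^{(N-a)}$ each occur to degree exactly $D_{N-a,a}$, so \hyperref[thm6.1]{\textbf{Theorem 6.1}} applied to the two fusions yields $2J_a\ge D_{N-a,a}$ and $2J_{N-a}\ge D_{N-a,a}$. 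Adding these and invoking the identity \hyperref[eq37]{(37)}, namely $J_N=J_a+J_{N-a}-D_{N-a,a}$, the hypothesis $J_N=0$ collapses both inequalities to equalities, giving $J_a=J_{N-a}=\frac{1}{2}D_{N-a,a}$.

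The second bullet then follows by reading $J_a=J_{N-a}$ through the definition of $J$: the relation $aJ-S_a=(N-a)J-S_{N-a}$ rearranges to $S_a=S_{N-a}-(N-2a)J$, and substituting $J=\frac{S_N}{N}$ gives $S_a=S_{N-a}-\frac{N-2a}{N}S_N$. For the third bullet I would exploit the ancestor/descendant structure recorded in the remark after \hyperref[thm6.1]{\textbf{Theorem 6.1}}: every derived polynomial carrying an $a$-type variable is a descendant of the ancestor $P_{\text{der}}(z^{(a)},z_{a+1},\ldots,z_N)$ and has $P_{\text{der}}(z^{(a)},z^{(N-a)})$ as a further descendant. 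The ancestor has $z^{(a)}$-degree $d_a\le 2J_a$ by \hyperref[thm6.1]{\textbf{Theorem 6.1}}, while the final descendant already carries $z^{(a)}$ to degree $D_{N-a,a}=2J_a$; since passing to a derived polynomial (a lowest-order coefficient in $\lambda$, then a factor thereof) cannot raise the degree of a retained variable, these bounds sandwich the $z^{(a)}$-degree at $2J_a$ throughout the chain.

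The main obstacle is the equality-forcing step: one must know that both degrees in $P_{\text{der}}(z^{(a)},z^{(N-a)})$ are genuinely attained rather than merely bounded, so that the two \hyperref[thm6.1]{\textbf{Theorem 6.1}} inequalities are tight enough to be saturated once $J_N=0$ is imposed; this is exactly what the explicit monomial form $c(z^{(a)}-z^{(N-a)})^{D_{N-a,a}}$ guarantees, and it is where homogeneity together with ULC is indispensable. A secondary subtlety in the third bullet is justifying that the degree of a surviving variable is nonincreasing under fusion, which is precisely what makes the sandwich $D_{N-a,a}\le d_a\le 2J_a=D_{N-a,a}$ valid and pins the degree at $2J_a$ for every derived polynomial in between.
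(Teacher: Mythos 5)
Your proposal is correct and follows essentially the same route as the paper: the explicit monomial form $c(z^{(a)}-z^{(N-a)})^{D_{N-a,a}}$ forced by homogeneity and ULC, the two applications of Theorem 6.1 combined with identity (37) to saturate the inequalities once $J_N=0$, and the ancestor/descendant sandwich $D_{N-a,a}\le d_a\le 2J_a=D_{N-a,a}$ for the third bullet. (You are also right that the second bullet should read $S_N$ rather than $S_n$, consistent with equation (41).)
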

How does this relate to $n$-CF?
\begin{dfn}\label{dfn10}
\textbf{($n$-AMS polynomial)} an $n$-AMS polynomial $P$ is one that satisfies AMS with $N$ variables such that $n|N$ and $J_n=\frac{1}{2}D_{n,1}(N-n)$.
\end{dfn}
This definition be viewed as a definition for an ``$n$-CF polynomial". One can define it for a \textit{single} polynomial by requiring all the conditions of the definition of $n$-CF except those that are related of being part of a sequence. Notice $D_{n,1}$ plays the role of $m$ in $n$-CF definition as $D_{n,1}=S_{n+1}-S_n=m$ in \hyperref[eq30]{(30)}. Therefore, we shall use $m$ instead of $D_{n,1}$.

Given an $n$-AMS polynomial $P(z_1,\ldots,z_N)$, define 
$$G:=\frac{P^n}{\Phi_{\frac{1}{m}}}.$$
By showing that any derived $G_{\text{der}}:=\frac{P^n_{\text{der}}}{\Phi_{\frac{1}{m},\text{der}}}$ is independent of any $n$-type particle, one can use all the arguments in the previous section to deduce all the identities of $n$-CF like \hyperref[eq30]{(30)} just for this polynomial. 

We have 
$$P_{\text{der}}(z^{(n)},z_{n+1},\ldots,z_N)=\prod_{i=n+1}^{N}(z^{(n)}-z_i)^mQ(z^{(n)},z_{n+1},\ldots,z_N).$$
But since $P$ is AMS, the highest degree of $z^{(n)}$ in $P_{\text{der}}$ is $2J_n=m(N-n)$. This implies that $Q$ can not depend on $z^{(n)}$ and also
$$G_{\text{der}}=\frac{P_{\text{der}}(z^{(n)},z_{n+1},\ldots,z_N)^n}{\Phi_{\frac{1}{m},\text{der}}(z^{(n)},z_{n+1},\ldots,z_N)}$$
does not depend on $z^{(n)}$ as $$\Phi_{\frac{1}{m},\text{der}}(z^{(n)},z_{n+1},\ldots,z_N)=\prod_{i=n+1}^{N}(z^{(n)}-z_i)^{mn} \prod_{n+1 \leq i < j \leq N} (z_i-z_j)^m.$$
Since a derived $G_{\text{der}}$ which \textit{could} depend on any $z^{(n)}$ is ultimately a descendant (modulo a permutation of variables) of
$$\frac{P_{\text{der}}(z^{(n)},z_{n+1},\ldots,z_N)^n}{\Phi_{\frac{1}{m},\text{der}}(z^{(n)},z_{n+1},\ldots,z_N)},$$
there is no derived $G_{\text{der}}$ depending on an $n$-type particle. Therefore, all identities like \hyperref[eq30]{(30)} hold for an $n$-AMS polynomial for all integers $\le N$ and $\equiv a \pmod{n}$.

Since we do not know if the polynomials in an $n$-CF $\nu$-sequence are AMS or not, we can not claim whether the two definitions are equivalent; although, it should be noted that using the same argument above, the highest degree of $z^{(n)}$ in $P_k$ is exactly $D_{n,1}(N_k-n)$ for an $n$-CF $\nu$-sequence, which could be smaller than $J_n$ by \hyperref[thm6.1]{\textbf{Theorem 6.1}}. 

Define $\nu$-sequence of $n$-AMS polynomials as a $\nu$-sequence of polynomials which are $n$-AMS but only with the requirement that a pattern of zeros can be meaningfully associated to the sequence (as described in \hyperref[dfn7]{\textbf{Definition 7}}). It turns out that the filling fraction is a consequence of this definition. 

\begin{thm}\label{thm6.3}
A $\nu$-sequence of $n$-AMS polynomials is also an $n$-CF $\nu$-sequence and if the polynomials in an $n$-CF $\nu$-sequence are AMS, then they are also $n$-AMS. Further, for a $\nu$-sequence of $n$-AMS polynomials:
$$S_{a}=S_{n-a}-\frac{n-2a}{n}S_n$$
\end{thm}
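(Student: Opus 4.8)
The plan is to prove the three assertions of \hyperref[thm6.3]{\textbf{Theorem 6.3}} separately, using the single-polynomial facts already established for $n$-AMS polynomials as the engine.

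For the first assertion, I would check the conditions of \hyperref[dfn8]{\textbf{Definition 8}} directly. The function $G_k = P_k^n/\Phi_{\frac{1}{m}}$ exists as a meromorphic function by fiat, and the substantive requirement---that every derived $G_{k,\text{der}}$ be independent of any $n$-type particle---is exactly the statement proved in the paragraph preceding the theorem for a single $n$-AMS polynomial (AMS forces the highest degree of $z^{(n)}$ in $P_{k,\text{der}}(z^{(n)},z_{n+1},\ldots)$ to be $2J_n = m(N_k-n)$, so the factor $Q$ carries no $z^{(n)}$-dependence). It then remains only to pin down the filling fraction: using \hyperref[eq30]{(30)} (valid for $n$-AMS polynomials) with $N_k = Kn$ and $S_0 = 0$ gives $S_{N_k} = K S_n + \tfrac{mnK(K-1)}{2}$, whence $d_k = 2S_{N_k}/N_k = \tfrac{2S_n}{n} + m(K-1)$ and $N_k/d_k \to n/m$ as $K \to \infty$. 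Since $n \mid N_k$ is built into the $n$-AMS definition, the sequence fits the indexing of \hyperref[dfn8]{\textbf{Definition 8}}.

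For the second assertion, I would start from the $n$-CF factorization $P_{k,\text{der}}(z^{(n)},z_{n+1},\ldots) = \prod_{i>n}(z^{(n)}-z_i)^m\, Q$ with $Q$ independent of $z^{(n)}$, so that the highest degree of $z^{(n)}$ in $P_{k,\text{der}}$ is exactly $m(N_k-n)$. By itself \hyperref[thm6.1]{\textbf{Theorem 6.1}} gives only $2J_n \ge m(N_k-n)$, which is why $n$-CF alone does not force $n$-AMS. Invoking the AMS hypothesis, the third bullet of \hyperref[thm6.2]{\textbf{Theorem 6.2}} makes $2J_n$ equal to the highest degree of $z^{(n)}$ in any derived polynomial depending on an $n$-type particle; combined with the factorization this yields $J_n = \tfrac12 m(N_k-n) = \tfrac12 D_{n,1}(N_k-n)$, the defining condition of an $n$-AMS polynomial.

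For the third assertion, I would reconcile the two symmetry identities now in hand. From \hyperref[thm6.2]{\textbf{Theorem 6.2}} applied with $N = N_k = Kn$ we have $S_a = S_{N-a} - \tfrac{N-2a}{N}S_N$, while \hyperref[eq30]{(30)} lets me expand $S_{N-a} = S_{(K-1)n+(n-a)}$ and $S_N = S_{Kn}$ in terms of $S_{n-a}$, $S_n$, $m$, and $K$. Substituting and collecting, the $S_n$-terms combine to $-\tfrac{n-2a}{n}S_n$ while every $m$- and $K$-dependent contribution cancels identically, leaving $S_a = S_{n-a} - \tfrac{n-2a}{n}S_n$ independently of $K$; as the pattern of zeros is common to the whole sequence, this is the claimed identity. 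I expect the genuine obstacle to be conceptual and to sit in the second assertion: one must see that $n$-CF controls only the factorization of the derived polynomial, producing the bound $2J_n \ge m(N_k-n)$, and that it is specifically the sharp equality supplied by AMS in \hyperref[thm6.2]{\textbf{Theorem 6.2}} that upgrades this to the equality defining $n$-AMS; the first and third assertions are then routine, the only care being the bookkeeping in the cancellation.
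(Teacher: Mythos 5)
Your proposal is correct and follows essentially the same route as the paper: the first assertion is reduced to the filling-fraction computation using the recursion (30) together with $d_k=2J=2S_{N_k}/N_k$ from AMS, and the third assertion is the same substitution of (30) into the identity $S_a=S_{N-a}-\frac{N-2a}{N}S_N$ from Theorem 6.2, with the $m$- and $K$-dependent terms cancelling as you describe. The only divergence is in the second assertion, where the paper computes $J_n=nJ_1-S_n=\frac{nS_{N_k}}{N_k}-S_n=\frac{m(N_k-n)}{2}$ directly from $J_{N_k}=0$ and (30), whereas you count the degree of $z^{(n)}$ in the derived polynomial via the $n$-CF factorization and invoke the third bullet of Theorem 6.2; both arguments are valid and yours is the one implicitly suggested by the paper's remark immediately preceding the theorem.
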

\begin{proof}
For the first part, we only need to prove that the filling fraction $\nu$ equals $\frac{n}{m}$ for $m=D_{n,1}$. We have to compute
$$\lim_{k \rightarrow \infty} \frac{N_k}{d_k}=\lim_{k \rightarrow \infty} \frac{N_k}{2J_{1,P_k}}=\lim_{k \rightarrow \infty} \frac{N_k}{\frac{2S_{N_k}}{N_k}}=\lim_{k \rightarrow \infty} \frac{N_k}{\frac{2S_{n}}{n}+m(\frac{N_k}{n}-1)},$$
where the fact that the polynomials are AMS ($\frac{S_{N_k}}{N_k}=J_{1,P_k}$) and satisfy the recursive relation in \hyperref[eq32]{(32)} has been used. By factoring out $N_k$ and taking the limit, we get $\nu=\frac{n}{m}$.

if the polynomials in an $n$-CF $\nu$-sequence are AMS, using the notations of the previous section, for every $P_k$, 
$$J_{N_k,P_k}=0 \implies S_{N_k}=N_kJ_{1,P_k} \implies J_{n,P_k}=nJ_{1,P_k}-S_n$$
$$=\frac{nS_{N_k}}{N_k}-S_n=\frac{m(N_k-n)}{2},$$
where in the last equality, \hyperref[eq30]{(30)} is used. As $m=D_{n,1}$, $P_k$ is $n$-AMS. 

Finally, the last statement for a $\nu$-sequence of $n$-AMS polynomials is an application of \hyperref[thm6.2]{\textbf{Theorem 6.2}} and \hyperref[eq32]{(32)} on any $P_k$ with $N=N_k$ variables.
\end{proof}

As the ultimate goal is to classify FQH states, first one needs to define sequence of polynomials with the right properties. Imposing more conditions as long as we are confident of them being ``physical" conditions should not be something to avoid even if it is not clear whether redundant ones are imposed. As the definition of $J_a$ is the angular momentum of the particle corresponding to $z^{(a)}$ in the FQH state, $J_a$ must be half of the highest degree of $z^{(a)}$ in the polynomial describing the FQH state. The AMS condition is equivalent to that when we consider an $n$-CF $\nu$-sequence. Therefore, It makes sense to consider AMS $n$-CF $\nu$-sequence to classify FQH states, even though we do not know whether this condition is redundant or not.

\bibliographystyle{apa}
\bibliography{main}

\address{\textsuperscript{1\label{1}}Dept of Mathematics, University of California,
Santa Barbara, CA 93106-6105, U.S.A.}
\address{\textsuperscript{2\label{2}}Microsoft Station Q and Dept of Mathematics, University of California,
Santa Barbara, CA 93106-6105, U.S.A.} 
\address{\textsuperscript{3\label{3}}Department of Physics, Massachusetts Institute of Technology, Cambridge, MA 02139, U.S.A.}

\end{document}